\documentclass[a4paper,UKenglish,cleveref, autoref, thm-restate,nolineno]{socg-lipics-v2021}

\title{Recognizing Subgraphs of Regular Tilings}

\author{Eliel Ingervo}
    {Aalto University, Finland}
    {eliel.ingervo@aalto.fi}
    {https://orcid.org/0009-0007-0494-979X}{}

\author{S\'andor Kisfaludi-Bak}
    {Aalto University, Finland}
    {sandor.kisfaludi-bak@aalto.fi}
    {https://orcid.org/0000-0002-6856-2902}{}

\funding{This work was supported by the Research Council of Finland, Grant 363444.}

\authorrunning{E. Ingervo and S. Kisfaludi-Bak}

\hideLIPIcs

\ccsdesc[300]{Theory of computation~Computational geometry}
\ccsdesc[300]{Theory of computation~Graph algorithms analysis}
\ccsdesc[300]{Theory of computation~Dynamic programming}

\keywords{Subgraph isomorphism, Sphere cut decomposition, Regular tiling, Hyperbolic tiling} 

\usepackage[utf8]{inputenc}
\usepackage[T1]{fontenc}
\usepackage{amsmath,amssymb,amsthm}
\usepackage[dvipsnames]{xcolor}
\usepackage[linesnumbered,ruled,vlined]{algorithm2e}
\usepackage{mathtools}
\usepackage{url}
\usepackage{thm-restate}
\usepackage{cleveref}
\usepackage{graphicx}
\usepackage{thm-restate}

\SetKw{Continue}{continue}

\graphicspath{{Figures/}}

\newtheorem{question}{Question}

\newcommand{\tiling}{{\mathcal{T}_{p,q}}}
\newcommand{\til}{{T_{p,q}}}
\newcommand{\conv}{\mathrm{conv}}
\newcommand{\reg}{\mathrm{Reg}}

\newcommand{\tB}{\widetilde B}
\newcommand{\wt}{\widetilde}
\newcommand{\bp}{\overline\phi}
\newcommand{\dom}{\mathrm{Dom}}
\newcommand{\ba}[1]{\overline{#1}}
\newcommand{\ma}[1]{\mathbb{#1}}
\newcommand{\mr}[1]{\mathrm{#1}}
\newcommand{\poly}{\mr{poly}}
\newcommand{\eps}{\varepsilon}

\newcommand{\EMPH}[1]{\emph{\textcolor{red!50!black}{#1}}}

\newcounter{ctr}
\loop
 \stepcounter{ctr}
 \expandafter\edef\csname c\Alph{ctr}\endcsname{\noexpand\mathcal{\Alph{ctr}}}
\ifnum\thectr<26
\repeat

\newcounter{btr}
\loop
 \stepcounter{btr}
 \expandafter\edef\csname b\Alph{btr}\endcsname{\noexpand\mathbb{\Alph{btr}}}
\ifnum\thectr<26
\repeat

\begin{document}

\maketitle
\begin{abstract}
    For $p,q\geq 2$ the $\{p,q\}$-tiling graph is the (finite or infinite) planar graph $T_{p,q}$ where all faces are cycles of length $p$ and all vertices have degree $q$. We give algorithms for the problem of recognizing subgraphs and induced subgraphs of these tiling graphs, as follows.
    \begin{itemize}
        \item For $\frac1p + \frac1q > \frac12$, these graphs correspond to the regular tilings of the sphere. These graphs are finite, thus recognizing their (induced) subgraphs can be done in constant time.
        \item For $\frac1p + \frac1q = \frac12$, these graphs correspond to the regular tilings of the Euclidean plane. For the Euclidean square grid $T_{4,4}$ Bhatt and Cosmadakis (IPL 1987) showed that recognizing subgraphs is NP-hard, even if the input graph is a tree.
        We show that a simple divide-and conquer algorithm achieves a subexponential running time in all Euclidean tilings, and we observe that there is an almost matching lower bound in $T_{4,4}$ under the Exponential Time Hypothesis via known reductions.
        \item For $\frac1p + \frac1q < \frac12$, these graphs correspond to the regular tilings of the hyperbolic plane. As our main contribution, we show that deciding if an $n$-vertex graph is isomorphic to a subgraph (or an induced subgraph) of the tiling $T_{p,q}$ can be done in quasi-polynomial ($n^{O(\log n)}$) time for any fixed $q$.
    \end{itemize}
    Our results for the hyperbolic case show that it has significantly lower complexity than the Euclidean variant, and it is unlikely to be NP-hard. The Euclidean results also suggest that the problem can be maximally hard even if the graph in question is a tree. Consequently, the known treewidth bounds for subgraphs of hyperbolic tilings do not lead to an efficient algorithm by themselves. Instead, we use convex hulls \emph{within the tiling graph}, which have several desirable properties in hyperbolic tilings. Our key technical insight is that planar subgraph isomorphism can be computed via a dynamic program that builds a (canonical) sphere cut decomposition of a solution subgraph's convex hull.
\end{abstract}

\section{Introduction}

\textsc{Subgraph Isomorphism} is the decision problem of whether a given \EMPH{pattern graph} $H$ is isomorphic to a subgraph of a given \EMPH{host graph} $G$. It is a fundamental NP-complete problem generalizing many other problems around finding patterns in graphs, such as \textsc{$k$-Clique, Hamiltonian Path, Hamiltonian Cycle}. With \textsc{Induced Subgraph Isomorphism} the goal is to decide if $H$ is isomorphic to an \emph{induced} subgraph of $G$: this generalizes other fundamental problems such as \textsc{Independent Set, Triangle Packing, Longest Induced Cycle}, etc. The complexity of these problems is widely studied, and today we have a good base understanding of their complexity.

The na\"ive brute force algorithm for \textsc{Subgraph Isomorphism} achieves a running time of $n^{O(n)}$ by trying all functions $f:V(H)\rightarrow V(G)$, and checking if the resulting mapping is a subgraph isomorphism (here $n=|V(G)|+|V(H)|$). This running time is asymptotically optimal~\cite{CyganFGKMPS16} under the Exponential Time Hypothesis (ETH)~\cite{ImpagliazzoP01}. However, this lower bound requires dense graphs, and the complexity of the problem changes when considering sparse graph classes. In particular, there is an algorithm with running time $2^{O(n/\log n)}$ for planar and minor-free graphs with a matching lower bound under ETH~\cite{BodlaenderNZ16}. Notice however that unlike many other NP-hard problems, this does not yield a \emph{subexponential} running time on planar graphs, i.e., this is one of the problems that do not have the so-called square root phenomenon~\cite{Marx13,CyganFKLMPPS15}: the existence of sublinear separators in planar graphs does not help reduce the running time to $2^{\widetilde O(\sqrt{n})}$. On the other hand, a randomized algorithm with running time $2^{O(\sqrt{|V(H)|}\log^2|V(H)|)}|V(G)|^{O(1)}$ has been proposed for the case when the host graph $G$ is planar (or more generally, apex-minor free) and the pattern $H$ is connected and has constant maximum degree~\cite{FominLMPPS22}.

In this paper, we are concerned with the variant of subgraph isomorphism where the host graph $G$ is a fixed infinite planar graph that is highly regular, more precisely, comes from a regular tiling. For $p,q\geq 2$ the \EMPH{$\{p,q\}$-tiling graph} is the (finite or infinite) planar graph $T_{p,q}$ where all faces are cycles of length $p$ and all vertices have degree $q$. In particular, the graph $T_{6,3}$ corresponds to the Euclidean hexagonal tiling. In general, these graphs correspond to the so-called regular tilings of the three classical $2$-dimensional geometries---spherical, Euclidean, hyperbolic---depending on whether $\frac1p+\frac1q$ is strictly greater than, equal to, or strictly less than $\frac12$, respectively. Observe that $T_{p,q}$ is finite if and only if it is spherical; these tilings and their (induced) subgraphs can be recognized in constant time. Thus, we will focus on Euclidean and hyperbolic tilings.

From this point onward, we will let $n$ denote the number of vertices in the pattern graph $H$. Since the graphs $T_{p,q}$ are vertex-, edge-, and face-transitive (the underlying tiling is so-called flag-transitive~\cite{GrunbaumShephard1977TilingsRegularPolygons}), we may also assume without loss of generality that $H$ is connected, as the connected components of a disconnected $H$ can be embedded at some large distance of each other, and due to transitivity we can freely choose the mapping of one vertex per component. We will let \textsc{Tiling Subgraph Recognition} and \textsc{Tiling Induced Subgraph Recognition} denote the resulting decision problems.
Our problem can be seen as a natural restriction of planar subgraph isomorphism to highly regular planar host graphs: indeed, one can take a ball of radius $\mathrm{diam}(H)$ (the shortest-path diameter of $H$) in the infinite tiling graph to define a finite host graph $G$. For Euclidean grids, the randomized algorithm~\cite{FominLMPPS22} can be run in a ball of radius $\mathrm{diam}(H)<|V(H)|$ and thus yields a running time $2^{O(\sqrt{|V(H)|}\log^2|V(H)|)}|V(H)|^{O(1)}=2^{O(\sqrt{n}\log^2 n)}$.

The problem of recognizing subgraphs of Euclidean tilings can be seen as a natural graph drawing problem: given a planar graph, can it be drawn on a square, triangular, or hexagonal grid so that each of its edges correspond to a single grid edge? The problem was already studied in the eighties: subgraph isomorphism is NP-complete if $G=T_{4,4}$ is the infinite Euclidean grid. Bhatt and Cosmadakis showed that this is true even if $H$ is a tree~\cite{BHATT1987263}.

The problem in hyperbolic tilings can also be motivated from the graph drawing/visualization perspective. Hyperbolic visualizations have been proposed for trees and other graphs with hierarchical structure~\cite{LampingRao1996HyperbolicBrowser,MunznerB95}, and embeddings into low-dimensional hyperbolic spaces are generally recognized as practically useful both in the machine learning~\cite{NickelK18,pmlr-v80-ganea18a,pmlr-v80-sala18a} and complex networks communities~\cite{Krioukov2010,Papadopoulos2012,Gugelmann2012}.

In hyperbolic tilings, Kisfaludi-Bak~\cite{Kisfaludi-Bak20} and independently, Kopczynski~\cite{Kopczynski21} proved that $n$-vertex subgraphs of hyperbolic tilings have treewidth $O(\log n)$, and this can be exploited algorithmically to gain quasi-polynomial algorithms. In fact, quasi-polynomial algorithms have been found in Gromov-hyperbolic planar graphs~\cite{Kisfaludi-BakML24}, and the tree-likeness of hyperbolic geometry can often yield much faster algorithms compared to Euclidean settings~\cite{Kisfaludi-Bak21,Blasius24}. On the other hand, as seen above, subgraph isomorphism is already hard for tree patterns, thus low treewidth  of the pattern alone should not lead to any improvements for subgraph isomorphism. We set out to answer the following question.

\begin{question}\label{mainq}
    Is there a quasi-polynomial ($n^{\poly(\log n))}$) algorithm for \textsc{Tiling (Induced) Subgraph Recognition}?
\end{question}

\subsection{Our contribution.} 
\subparagraph*{Hyperbolic Tilings.} Our main contribution concerns hyperbolic tilings. We answer \Cref{mainq} by giving a quasi-polynomial algorithm for any fixed $q$ as follows.

\begin{restatable}{theorem}{Maintheorem}\label{main_theorem}
    For any integers $p,q\geq 2$ with $\frac1p+\frac1q<\frac12$ the \textsc{Tiling Subgraph Recognition} and \textsc{Tiling Induced Subgraph Recognition} can be solved in time $2^{O(q+q\log \frac{n}{p+q})}\cdot n^{O(1+\log \frac{n}{p+q})}$.
\end{restatable}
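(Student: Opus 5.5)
The plan follows the approach announced in the abstract: a dynamic program that guesses a canonical sphere cut decomposition of the convex hull of the image of $H$, taken inside $T_{p,q}$. Suppose $\phi: V(H)\to V(T_{p,q})$ is a solution, and let $S=\phi(H)$. Let $C=\conv(S)$ be its convex hull in the graph sense, i.e., the smallest set of tiles whose union is closed under geodesics of $T_{p,q}$ and contains $S$. I would proceed in three steps.

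\textbf{Step 1 (size of the hull).} In hyperbolic tilings the convex hull of an $n$-vertex connected subgraph has only $O(n)$ vertices. The reason is that geodesics diverge exponentially, so filling in geodesics between points of $S$ adds at most a linear number of tiles. We may assume $S$ is connected, since $H$ is.

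\textbf{Step 2 (small decomposition).} $C$ is a convex 2-connected region of the tiling, hence an outerplanar-like disk subgraph. I expect it to have a sphere cut decomposition of width $O(\log |C|)$, and even $O(q+\log\frac{n}{p+q})$ measured in tiles, because the number of tiles is about $n/(p+q)$. The decomposition is obtained by repeatedly cutting $C$ along geodesic segments into pieces of roughly balanced size. Convexity ensures that each noose is a union of $O(\log n)$ geodesic pieces, each of which is a shortest path in the tiling. The decomposition should be made \emph{canonical}, meaning it is determined by the shape of the region rather than by $H$. This lets the DP enumerate the possible boundaries of a piece: a noose is described by $O(\log n)$ geodesic pieces, each given by endpoints reachable within distance $O(n)$. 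Relative to a fixed anchor, the number of possible piece shapes is then $2^{O(q+q\log\frac{n}{p+q})}\cdot n^{O(\log\frac{n}{p+q})}$.

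\textbf{Step 3 (dynamic program).} For every candidate region $R$ bounded by a noose, the DP state records three things: a partial map from the $O(\log n)$ boundary tiling vertices of $R$ to $V(H)\cup\{\bot\}$, the set of edges of $H$ crossing the boundary, and the subset of $V(H)$ mapped inside $R$. Storing that subset explicitly would cost $2^n$, so this must be avoided. The fix is to exploit the structure: the vertices mapped inside $R$ form a union of components of $H$ minus the $O(\log n)$ boundary-mapped vertices. A state is therefore specified by the boundary images together with a choice of which components go inside, and this choice is determined by which faces of the noose each component touches. This gives $n^{O(\log n)}$ states. Two children combine into a parent by checking consistency of the maps along the shared cut. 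For the induced variant we additionally require that tiling edges between mapped vertices correspond to edges of $H$; this is checkable locally because every such edge lies in some leaf piece or on a cut. Correctness follows by induction along the canonical decomposition of $\conv(\phi(H))$ for any solution $\phi$. Conversely, any accepted DP run glues into a valid embedding, since distinct regions are interior-disjoint.

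The main obstacle is Step 3's requirement that the interior content of a region be summarizable in $n^{O(\log n)}$ ways. Components of $H$ with respect to the boundary images can be many and interchangeable, which threatens exponential blowup. My first attempt would be to group isomorphic components and record only counts, giving polynomially many options per boundary configuration. The bound of Step 2 on the number of canonical pieces is secondary but also delicate, because the $q$-dependence in the exponent must come out as $q\log\frac{n}{p+q}$.
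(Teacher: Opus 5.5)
Your high-level architecture is the paper's: a convex hull of size $O(n)$, a low-width sphere cut decomposition of the hull, and a DP over nooses carrying boundary assignments plus a choice of which components go inside. However, neither of the two counting steps that produce the running time is established.

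First, the step you flag as the main obstacle is left open, and your remedy (grouping isomorphic components and storing counts) is unverified and unnecessary. The missing observation is elementary. Reject $H$ if some vertex has degree more than $q$. Since $H$ is connected, every component of $H-B$ then contains one of the at most $q|B|$ neighbours of the boundary vertices $B$, so the components cannot be ``many''. The inside/outside choice is therefore encoded by a subset $\widetilde B\subseteq N_H(B)\setminus B$: the components of $H-B$ meeting $\widetilde B$ go inside. This gives $2^{q|B|}=2^{O(q+q\log\frac{n}{p+q})}$ choices. That is exactly the source of the $2^{O(q+q\log\frac{n}{p+q})}$ factor in the theorem; the noose enumeration only contributes $(pq)^{O(k)}$.

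Second, your Step~2 does not justify the noose count, and as described it would not give it. Two problems arise:
\begin{itemize}
\item Pieces that are shortest paths between hull vertices lie in the hull by convexity. A noose following them would meet all their vertices (indeed run along hull edges, which a noose may not do), so its complexity would not be $O(\log n)$.
\item Endpoints ranging over a ball of radius $O(n)$ around a fixed anchor give $2^{\Theta(n)}$ choices in a hyperbolic tiling, so no $n^{O(\log n)}$ bound follows.
\end{itemize}
The paper instead imports the $O(1+\log\frac{n}{p+q})$ treewidth bound for $\mathrm{conv}(H)$ and the existence of sphere cut decompositions of width $O(\mathrm{tw})$. It then normalizes each noose. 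Through a bounded face (a tile, by convexity) the noose goes vertex--tile centre--vertex. Through the outer face it uses angle-bisector rays that provably avoid the hull, so that part costs only $O(1)$ to describe. A noose of complexity $k$ is then a word of $k$ local steps with $pq$ options each. The DP table is indexed by nooses \emph{up to orientation-preserving isometry}, never by absolute position, and children are fitted against the parent through shared boundary vertices of $H$.

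Finally, the resulting $p^{O(k)}$ factor is absorbed into $n^{O(k)}$ only when $p\le n$. The case $p>n$ needs a separate observation, which your proposal omits: $H$ must then be a tree of maximum degree at most $q$, and every such tree embeds.
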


In particular, the running time is $n^{O(\log n)}$ for any constant $q$ or even $q = O(\log n)$. We get a polynomial algorithm if $q$ is constant and $p=\Omega(n^\eps)$ for some constant $\eps>0$.

Due to the exponential expansion of the hyperbolic plane, a ball of radius $r$ in any hyperbolic tiling contains $2^{\Omega(r)}$ vertices, thus when applying~\cite{FominLMPPS22} in a ball of radius $\mathrm{diam}(H)\leq|V(H)|$ we get a running time that is potentially exponential in $n$. Thus, we need a different approach, which we can outline as follows.

We aim for a dynamic programming algorithm that builds a solution embedding $\phi:V(H)\rightarrow T_{p,q}$. Note that having a small separator in $H$ does not give an algorithm: indeed, after mapping the vertices of a separator $S$ to $T_{p,q}$ with size $|S|=O(\log n)$, the resulting subproblems (embed each connected component of $H-S$ with the mapping $\phi$ already fixed on $S$) are not independent: two different subgraphs can overlap. Thus, the resulting subproblems must be separated geometrically. A natural choice is to use a so-called \emph{sphere cut decomposition}~\cite{DornPBF10} of $H$: this has been successful in the hyperbolic setting before~\cite{Blasius24}. A sphere cut separator (usually called a \EMPH{noose}) of a plane graph $H$ is a closed curve drawn in the plane that intersects the planar drawing of $H$ in only $O(\log n)$ vertices. Note, however, that as a geometric curve, a sphere cut separator of a plane graph may require a description much more complex than the size of the corresponding vertex separator: indeed, in order to separate a tree embedded in the plane at a single vertex $v$, we must define a curve that avoids all branches and intersects the tree at only $v$ (see Figure~\ref{fig:long_separator}).

\begin{figure}[ht]
  \centering
    \centering
    \includegraphics{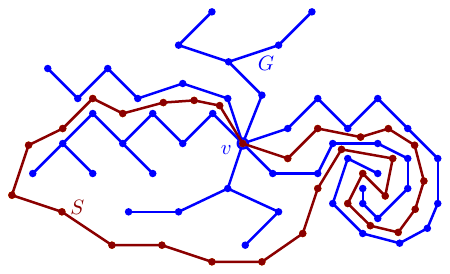} 
    \caption{A plane graph $G$ and a sphere cut separator $S$. The vertex separator corresponding to $S$ has size~$1$, however the geometric curve $S$ is much more complex.}
    \label{fig:long_separator}
\end{figure}

Consequently, rather than finding a sphere cut separator of $\phi(H)$, we use the notion of \emph{convex hulls} within graphs. The \EMPH{interval} of a pair of vertices $a,b$ in a graph $G$ is the subgraph given by all shortest paths from $a$ to $b$ in $G$. A subgraph $C$ of $G$ is \EMPH{convex} if for any $a,b\in V(C)$ we have that the interval of $a,b$ in $G$ is contained in $C$. The \EMPH{convex hull} of a vertex set $X\subset V(G)$ is the intersection of all convex subgraphs that cover $X$. It turns out that convex hulls in tiling graphs $T_{p,q}$ have several desirable properties. Recently, it was shown that the convex hull of the vertices of a connected $n$-vertex tiling graph has size $O(n)$~\cite{KisfaludiBak25}. Consequently, any fixed solution embedding $\phi(H)$ of $H$ has a sphere cut decomposition with curves that have complexity $O(\log n)$. (This requires normalizing the curves, similarly to~\cite{Blasius24}.) Our dynamic programming algorithm embeds ever larger subgraphs of $H$ (defined via a separator and its neighborhood on each side of the sphere cutting curve) into all possible sphere cuts of complexity $O(\log n)$. Unlike~\cite{Blasius24} where the goal was to get a solution for \textsc{Independent Set}, there are significant challenges remaining even after having an efficient way of enumerating all nooses: the embedding of $H$ is being built and thus the convex hull of the embedded graph is not available to the algorithm.

We note here that most of our proofs use facts from hyperbolic geometry in a black-box fashion; our arguments are mostly relying on general properties of planar graphs. As a result, our algorithm is easily understood without any prior knowledge of hyperbolic geometry.

\subparagraph*{Euclidean Tilings.} In the Euclidean setting, one can observe that rather than complicated nooses, it is sufficient to use a decomposition into rectangular regions. Compared to the randomized $2^{O(\sqrt{n}\log^2 n)}$ algorithm~\cite{FominLMPPS22}, a simple deterministic divide-and-conquer algorithm can shave a $\log n$ factor from the exponent in this special case. 

\begin{restatable}{theorem}{secondtheorem}\label{thm:second_theorem}
    For any integers $p,q\geq 2$ with $\frac1p+\frac1q=\frac12$ the \textsc{Tiling (Induced) Subgraph Recognition} problem can be solved in time $n^{O(\sqrt{n})}$.
\end{restatable}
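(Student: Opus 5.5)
The plan is a divide-and-conquer dynamic program over axis-parallel rectangular regions of the Euclidean tiling. The three Euclidean tilings are $T_{4,4}$, $T_{3,6}$ and $T_{6,3}$. All of them can be drawn in a fixed affine coordinate system in which every vertex has integer coordinates and every edge has constant length. So "rectangles" of the tiling make sense: we take the vertices in a box $[x_1,x_2]\times[y_1,y_2]$ with integer corners. Since $H$ is connected, any embedding $\phi$ has image of diameter less than $n$. By transitivity we may therefore assume $\phi(H)$ lies inside a fixed box $B_0$ of side $O(n)$.

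The recursion cuts the current box by a vertical or horizontal line, alternating so that the box stays of bounded aspect ratio. We guess the restriction of $\phi$ to the vertices of $H$ that are mapped to a bounded-width strip $S$ around the cut line, together with the edges of $H$ that cross the strip. The key counting step is a balanced-cut claim. Among the $\Theta(n)$ parallel candidate lines inside $B_0$, each vertex of $\phi(H)$ lies in only $O(1)$ strips. By averaging over the $\Theta(\sqrt n)$ lines in the middle third of the box, some line has only $O(\sqrt n)$ vertices of $\phi(H)$ in its strip. That bound holds at the top level. In deeper levels the box shrinks, but the count of vertices in a sub-box can only go down, so I would use the simpler bound: choose a line with at most $O(\sqrt{n})$ strip vertices whenever the box has side at least $\sqrt n$. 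Boxes of side at most $\sqrt n$ have boundary of size $O(\sqrt n)$ and are handled the same way.

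The subproblem state consists of three things. The first is a box $B$, with polynomially many choices. The second is a partial map $\psi$ defined on a set $S_H\subseteq V(H)$ of size $O(\sqrt n)$, sending $S_H$ into the boundary layer of $B$; there are $n^{O(\sqrt n)}$ choices. The third is a subset $U\subseteq V(H)$ of vertices to be embedded strictly inside $B$. The danger is that $U$ could be an arbitrary subset, which would give $2^n$ states. I would avoid this as follows. Fix the vertices of the separator $S_H$ mapped onto the accumulated boundary of $B$. Then $U$ is forced to be a union of connected components of $H-S_H$. Each component must lie entirely inside or entirely outside $B$, because it is connected and avoids the separating layer. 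The boundary of $B$ is a union of at most 4 recursion cuts, each carrying $O(\sqrt n)$ separator vertices, so $|S_H|=O(\sqrt n)$.

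Components not adjacent to $S_H$ do not arise, since $H$ is connected. Every component of $H-S_H$ is therefore attached to some vertex of $S_H$. Its side is determined by the side on which its neighbors' mapped edges leave the boundary layer. This holds because edges have unit length and the layer has constant width, so adjacency to a boundary vertex pins down which side the neighbor is on. Hence $U$ is determined by $\psi$ plus $2^{O(\sqrt n)}$ extra bits, which record for each boundary vertex the directions of its incident edges.

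The recursion then combines two child boxes that agree on the mapping of the new cut strip. A box $B$ is solvable if there exist a cut and a map of the new strip vertices such that both children are solvable. The base case is a box containing $O(\sqrt n)$ vertices of $H$, solved by brute force in $n^{O(\sqrt n)}$ time. For the induced variant, we additionally check non-adjacency between mapped vertices across the strip; within children this is local, so it is handled inside their checks. The total number of states is $n^{O(\sqrt n)}$, and each state combines in $n^{O(\sqrt n)}$ time, giving the claimed bound.

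The main obstacle I expect is justifying the balanced cut at every recursion level with a boundary that stays $O(\sqrt n)$ rather than accumulating. I would handle this by always cutting the longer side, so each box has at most four cut sides, each charged $O(\sqrt n)$. I would also argue that a good cut always exists, by averaging over the lines in the middle third of the box's longer side. When that side is shorter than $\sqrt n$, the whole side cut has size $O(\sqrt n)$ anyway.
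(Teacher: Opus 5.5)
Your proposal is correct, but it reaches the bound by a different route from the paper.

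The paper applies the separator theorem of Carmi et al.\ (with its affine corollary for $T_{3,6}$ and $T_{6,3}$) to the currently embedded subgraph. This gives an axis-parallel line with $O(\sqrt n)$ vertices of $H$ on it and at most $\frac45 n$ vertices on each side. It then runs a plain, unmemoized recursion that guesses only the new line's vertices, their images, and the sides of their neighbours. Hence $T(n)=N^{O(\sqrt n)}\,T(\frac45 n)$, and the exponents form a geometric series. Since only newly guessed vertices are ever counted, the paper never has to bound the boundary accumulated from earlier cuts.

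You instead cut geometrically, in the middle third of the longer side. You replace the separator theorem by an elementary averaging argument: a box of longer side $L$ has a cut carrying $O(\min(L,n/L))=O(\sqrt n)$ vertices of $\phi(H)$. This follows by averaging over the $\Theta(L)$ lines of the middle third when $L\ge\sqrt n$, and is trivial otherwise.

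Your cuts are balanced in area, not in vertex count, and you charge each of them $O(\sqrt n)$. An unmemoized recursion of depth $O(\log n)$ would therefore only give $n^{O(\sqrt n\log n)}$. What rescues you is your observation that each box has four sides, each lying on a single earlier cut. So every state has $O(\sqrt n)$ boundary vertices, and a memoized table with $n^{O(\sqrt n)}$ entries suffices.

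Your version is self-contained and has an explicit bounded state space. The paper's version needs the external separator theorem, but needs neither a bound on the accumulated boundary nor memoization.

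Two further remarks.
\begin{enumerate}
\item The strip is unnecessary. In the affine coordinates every edge direction is, up to sign, $(1,0)$, $(0,1)$ or $(1,-1)$. So a grid line meets edges only at their endpoints and is already a vertex separator.
\item More importantly, your combination step must explicitly check that the vertex sets determined by the two child states, together with the newly guessed strip vertices, exactly partition $U$. This is the analogue of Conditions 2--4 in the paper's definition of potential children. Without this check, a child state with wrongly guessed side bits could absorb and re-embed a vertex of $S_H$ that is already mapped onto the other child's part of the boundary. The DP would then accept an inconsistent combination.
\end{enumerate}
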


Moreover, we observe that a composition of known reductions (notably~\cite{BHATT1987263}) imply that the running time is nearly optimal for the square grid: under the Exponential Time Hypothesis (ETH)~\cite{ImpagliazzoP01} there is no $2^{o(\sqrt{n})}$ algorithm for $\{4,4\}$-\textsc{Tiling Subgraph Recognition}.

\subsection{Further related work.}
Considered from the parameterized viewpoint, \textsc{Subgraph Isomorphism} can be solved in time $2^{O(|V(H)|)}\cdot |G|^{O(\mathrm{tw}(H))}$~\cite{Alon95}, which implies that the size of $H$ matters more than the size of $G$ for the complexity of the problem. The special case of subgraph isomorphism where both $H$ and $G$ are trees can be solved in polynomial time~\cite{MATULA197891}, however, subgraph isomorphism is NP-Hard even when both $H$ and $G$ have treewidth at most two~\cite{MatousekT92}. See~\cite{MarxP13} for an overview of several parameters and their impact on running times for \textsc{Subgraph Isomorphism}. Bodlaender et al.~\cite{BodlaenderHKKOOZ20} studied subgraph isomorphism for specific forbidden minors. Nederlof~\cite{Nederlof20} gave algorithms for detecting and counting subgraphs, achieving far-reaching generalizations of~\cite{BodlaenderNZ16}. Very recently, Lokshtanov et al.~\cite{LokshtanovPSSXZ25} gave single-exponential FPT algorithms for (a generalization of) subgraph isomorphism in the very general setting of graphs of exponential expansion; crucially, this graph class includes hyperbolic tilings, and their algorithms do not rely on planarity. In fact, the graphs obtained from 3-dimensional regular hyperbolic tilings also have exponential expansion.

The combinatorial structure of the hyperbolic tilings as well as subgraph recognition in simple transitive graphs has been explored in several works. Cannon~\cite{Cannon1984CombinatorialStructure} explores the combinatorial structure of tilings in a more general setting. Margenstern~\cite{Margenstern2006AlgorithmicTilingsPQ} considered the problem of generating spanning trees of tilings with simple formulas. The problem of recognizing cycles and closed walks in tiling graphs is also closely related to the word problem in (hyperbolic) groups~\cite{Gromov1987,Epstein92,Holt2001RealTimeWordProblem}.

\section{Preliminaries}\label{sec:prelims}

\subparagraph*{$\{p,q\}$-tilings.} A \EMPH{(polygonal) tiling} of a surface is a decomposition of the surface into pairwise interior-disjoint polygons called \emph{tiles} that cover the surface. A tiling is edge-to edge if the shared boundary of any pair of polygons is either empty, or some collection of vertices of both polygons, or a collection of full edges of both polygons. A \EMPH{regular tiling} of a surface is an edge-to-edge tiling with congruent regular polygons. The ${p,q}$-tiling $\tiling$ is the regular tiling with tiles of $p$ edges where $q$ tiles meet at each vertex. See~\cite{CoxeterMoser,ConwayGS2008,grunbaum87,adams23} for a more detailed background on tilings. We note that the 1-skeleton of $\tiling$ is a planar embedding of the graph $\til$.

We say that a graph $H\subset\til$ is \EMPH{convex}~\cite{Pelayo13}, if for any pair of vertices $u,v\in V(H)$, all shortest paths between $u$ and $v$ in $\til$ are subgraphs of $H$. A \EMPH{convex hull} of $H$, denoted $\conv(H)$, is a minimal convex graph such that $H\subseteq\conv(H)\subset\til$. 
Note that convex graphs are always connected. 

\subparagraph*{\textsc{Tiling (Induced) Subgraph Recognition}.}
For two graphs $H=(V(H),E(H))$ and $G=(V(G),E(G))$, a \EMPH{subgraph isomorphism} from $H$ to $G$ is an injective map $\phi:V(H)\to V(G)$ such that for each edge
$uv\in E(H)\ \Rightarrow\ \phi(u)\phi(v)\in E(G)$.
An \EMPH{induced subgraph isomorphism} from $H$ to $G$ is a subgraph isomorphism $\phi':V(H)\to V(G)$ such that for each edge
$uv\not\in E(H)\ \Rightarrow\ \phi'(u)\phi'(v)\not\in E(G)$.
The \textsc{Tiling Subgraph Recognition}
problem asks, given a graph $H$,
whether there exists a subgraph isomorphism from $H$ to $\til$. The \textsc{Tiling Induced Subgraph Recognition} variant asks the same with “subgraph” replaced by “induced subgraph.”

\subparagraph*{Treewidth.} A tree decomposition of a graph $H$ is a tree $\cR$ whose vertices are subsets of $V(H)$ called bags with the following properties:
(i) $\bigcup\limits_{B\in V(\cR)}B=V(H)$,
(ii) for each $uv\in E(H)$ there exists a bag $B\in V(\cR)$ such that $u,v\in B$, and
(iii) for every vertex $v\in V(H)$ the set of bags $\{B\in V(\cR)\mid v\in B\}$ induces a connected subtree of $\cR$.
The \EMPH{width} of a tree decomposition $\cR$ is $\max_{B\in V(\cR)}|B|-1$, and the \EMPH{treewidth} of a graph $H$ is the minimum width of any tree decomposition of $H$.

\subparagraph*{Poincaré disk.} We visualize the hyperbolic plane $\mathbb{H}^2$ using the \emph{Poincaré disk model}, where all points of the hyperbolic plane are assigned to points in the Euclidean unit disk, and straight hyperbolic lines appear as circular arcs perpendicular to the Euclidean unit disk. See~\cite{benedetti1992lectures,cannon1997hyperbolic,iversen1992hyperbolic} for more information. 
Figure~\ref{fig:basic_children} (i) depicts $\cT_{5,4}$ in this model.

\subparagraph*{Orientation-preserving isometry.}
An \emph{isometry} of the hyperbolic plane $\mathbb{H}^2$ is a bijection
$f : \mathbb{H}^2 \to \mathbb{H}^2$ that preserves hyperbolic distances.
An isometry is \emph{orientation-preserving} if it preserves the
clockwise/counterclockwise order of triples of points. 
In this work, we only consider orientation-preserving isometries of $\mathbb{H}^2$ that map the tiling $\tiling$ to itself.
Orientation-preserving isometries act as automorphisms on $\til$ that preserve the orientation of faces as well as the combinatorial embedding (the clockwise orientation of edges around each vertex).
See Figure~\ref{fig:candidate noose} for an illustration.

\subsection{Properties of convex hulls in $\til$.}\label{hyperbolic}

We get the following results from~\cite{KisfaludiBak25} regarding the convex hull of a graph in $\til$ which we will use to prove structural properties of sphere cut decompositions. 
Every theorem and lemma marked with a $\star$ is proved in the appendix.

\begin{lemma}[Lemma 3.1 of~\cite{KisfaludiBak25}]\label{lm:path_around_tile}
    Given a tile $\tau\in\tiling$ and two vertices $u,v\in V(\tau)$, any shortest path between $u$ and $v$ uses only vertices from $V(\tau)$.
\end{lemma}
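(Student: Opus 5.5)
We argue by contradiction, with a planarity argument and a length bound. We work in the hyperbolic case $\frac1p+\frac1q<\frac12$, which gives $p\ge 3$, $q\ge 3$, and $p\ge 5$ when $q=3$ (the $\{3,q\}$ case needs $q\ge 7$). Let $u,v\in V(\tau)$. The boundary cycle $\partial\tau$ splits into two $u$--$v$ arcs of lengths $a$ and $p-a$, so $d(u,v)\le \min(a,p-a)\le \lfloor p/2\rfloor$. Let $P$ be a shortest $u$--$v$ path and suppose it uses a vertex outside $V(\tau)$. Then $P$ contains a subpath $P'$ between two vertices $x,y\in V(\tau)$ whose interior is nonempty and avoids $V(\tau)$. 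Since subpaths of shortest paths are shortest, it suffices to show that such a $P'$ is strictly longer than the shorter boundary arc of $\tau$ between $x$ and $y$. That contradicts minimality.

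The first step is planarity. $P'$ lies in the exterior of $\tau$, so $P'$ together with one of the boundary arcs $A$ of $\tau$ from $x$ to $y$ bounds a closed disk $D$ that does not contain $\tau$. $D$ is a union of tiles, and $\partial D$ is a simple cycle of length $|P'|+|A|$.

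The second step is a combinatorial Gauss--Bonnet (Euler characteristic) count for the disk $D$. Let $F$, $E$ and $V$ count the tiles, edges and vertices of the tiling inside $D$ (tiles, and edges and vertices including those on the boundary). Then $V-E+F=1$ and $pF=2E-|\partial D|$. Every interior vertex has degree $q$, and every boundary vertex $w$ lies in some number $t_w\ge1$ of tiles of $D$. Combining these gives the standard identity
\[\sum_{w\in\partial D}\Bigl(\tfrac{q}{2}-1-\tfrac{q}{p}\cdot\tfrac{?}{}\Bigr)\ldots\]
Rather than carry constants abstractly, I will use the cleaner form: in a hyperbolic tiling, any disk bounded by a cycle has at least $|\partial D|(\text{something})$ boundary vertices of ``low turn'' $t_w\le \lceil q/2\rceil-1$. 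The key consequence is the following. On $A$ every interior vertex of the arc has $t_w\ge q-1$, because $\tau\notin D$ and all the other $q-1$ tiles at $w$ lie on the $D$ side. The endpoints $x,y$ have $t_w\ge1$. Positivity of curvature on the boundary must therefore be carried by $P'$. The Euler count forces $P'$ to contain many vertices with $t_w=1$, i.e. ``convex'' turns. Equivalently, $D$ contains at least one tile, and peeling tiles off $D$ shows that $|P'|\ge |A|+1$. I would make this precise by induction on $F$. If $F=1$, then $D$ is a single tile $\tau'$ sharing the arc $A$ with $\tau$. Two distinct tiles share at most one edge in these tilings, so $|A|\le1$ and $|P'|=p-1\ge 2>|A|$. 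For the inductive step, remove a tile of $D$ adjacent to $P'$ whose intersection with $\partial D$ is a single arc of length at least $p-2$. Such an ``ear'' exists by the Euler count, since $\frac1p+\frac1q<\frac12$. The ear replaces at least $p-2$ edges with at most $2$, which keeps the inequality $|P'|>|A|$.

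The main obstacle is proving the existence of an ear with a long boundary arc, i.e. the curvature-counting step. I would first try the discrete Gauss--Bonnet formula $\sum_{w\in\partial D}(\text{turn}) \ge$ const, where negative curvature makes interior contributions favourable, and then check the tight small cases $\{3,7\}$ and $\{4,5\}$ by hand. A fallback is that this is exactly Lemma 3.1 of~\cite{KisfaludiBak25}, which we may invoke directly.
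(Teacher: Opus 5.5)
The paper does not prove this lemma: it imports it as Lemma~3.1 of~\cite{KisfaludiBak25}, so your closing fallback is exactly what the paper does. Your self-contained argument, however, has a genuine gap at the step you flag. The curvature identity is left with placeholders, and the ear lemma you would derive from it is false as stated. It also would not drive the induction even if it held.

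For $q=3$, an ear meeting $\partial D$ in $p-2$ edges need not exist. In $\{7,3\}$, take $D$ to be a tile together with its seven neighbours, and $\tau$ the outer tile at a vertex where two neighbours meet. Then $|A|=2$ and $|P'|=26$, but every tile of $D$ meets $\partial D$ in at most $4=p-3$ edges.

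For $p=3$, single-tile peeling cannot work at all. Replacing $\ell$ boundary edges by $p-\ell$ changes $|P'|$ by $p-2\ell$, so you need $\ell\ge p/2$, i.e.\ $\ell\ge 2$ for triangles. But in $\{3,7\}$, let $D$ be the seven triangles around a vertex and $\tau$ an outer triangle on one edge of the resulting heptagon. Every tile of $D$ has exactly one boundary edge, so every peel lengthens the path, and the implication $|P''|>|A|\Rightarrow|P'|>|A|$ fails. You also never check that after peeling you again have a disk bounded by an arc of $\tau$ and a path avoiding $V(\tau)$. (Minor: for $q=3$, hyperbolicity needs $p\ge 7$, not $p\ge 5$.)

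The repair is to drop the induction and carry out the Euler count once on all of $D$. Let $D$ have $F$ tiles and a boundary cycle with $L=|P'|+|A|$ vertices and edges. For $w\in\partial D$, let $t_w$ be the number of tiles of $D$ at $w$. Euler's formula $V-E+F=1$ for $D$, together with the incidence counts $pF=2E_{\mathrm{int}}+L$ and $pF=qV_{\mathrm{int}}+\sum_w t_w$, gives
\[
\sum_{w\in\partial D}\Bigl(1-\frac{2t_w}{q}\Bigr)=2+F\Bigl(p-2-\frac{2p}{q}\Bigr).
\]
Now use three facts:
\begin{itemize}
\item the $|A|-1$ interior vertices of $A$ have $t_w=q-1$;
\item the $|P'|+1$ vertices of $P'$ have $t_w\ge 1$;
\item $F\ge 1$ and $p-2-2p/q>0$.
\end{itemize}
Together these give $\frac{q-2}{q}\bigl(|P'|-|A|+2\bigr)\ge\frac{p(q-2)}{q}$. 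Hence $|P'|\ge|A|+p-2>|A|\ge d(x,y)$, which contradicts $P'$ being a shortest path. The bound is tight when $D$ is a single tile.
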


 \begin{lemma}[Lemma 3.7 of~\cite{KisfaludiBak25}]\label{lm:kisfaudibak25_lemma3_3}
    Let $u,v\in V(\til)$ and let $S$ be the segment in $\mathbb{H}^2$ from $u$ to $v$. Any shortest path from $u$ to $v$ in $\til$ visits each vertex intersected by $S$ and for each edge $(ww')$ visited by $S$, any shortest path from $u$ to $v$ in $\til$ visits $w$ or $w'$.
\end{lemma}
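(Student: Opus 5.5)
The statement is Lemma 3.7 of~\cite{KisfaludiBak25}: for any $u,v\in V(\til)$, every shortest $u$--$v$ path in $\til$ passes through each vertex lying on the hyperbolic segment $S=uv$, and through at least one endpoint of each edge that $S$ meets. The plan is a topological separation argument. Each vertex or edge crossed by $S$ is extended to a bi-infinite ``wall'' in $\til$ separating $u$ from $v$. Any $u$--$v$ path must meet the wall. The remaining work is to show that a shortest path can only meet the wall at the prescribed vertex, or at one of the prescribed edge endpoints.

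\textbf{Walls.} Take the hyperbolic line $L$ through a point $x\in S$ perpendicular to $S$. Let $W$ be the union of all tiles meeting $L$ near $x$; it is a chain of tiles crossing $S$. Its two boundary sides are edge paths in $\til$, and $u$ and $v$ lie on opposite sides of the wall. By planarity, every $u$--$v$ path in $\til$ contains a vertex of each side.

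\textbf{Vertex case} ($x=w$ a vertex on $S$). Use a geodesic cut by contradiction. Suppose a shortest path $P$ avoids $w$. Then $P$ passes around $w$ on one side, so $P\cup S$ encloses a region containing some tiles incident to $w$.

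Take a tile $\tau$ incident to $w$ that $S$ crosses. By \Cref{lm:path_around_tile}, a shortest path between two vertices of $\tau$ stays on $\partial\tau$. The tool is a local exchange: replace the detour of $P$ around $w$ by a route through $w$ and compare lengths. The key metric ingredient is that in $\{p,q\}$-tilings with $\frac1p+\frac1q<\frac12$, the angle $2\pi/q$ at $w$, split by the straight segment $S$, forces each side of $S$ at $w$ to contain $\geq q/2$ corners (up to rounding). Any path going around $w$ therefore crosses at least $\lceil q/2\rceil-1$ full tiles incident to $w$. Combinatorially, as the standard Dehn/small-cancellation-type argument for these tilings shows, such a detour is strictly longer than the route through $w$ unless both routes tie. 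A tie would contradict the fact that $S$ passes straight through $w$.

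\textbf{Edge case} ($S$ crosses the interior of edge $ww'$). Apply the same argument with the dual structure. A path avoiding both $w$ and $w'$ must go around the whole edge, and the exchange argument shows this is longer. A cleaner alternative is to reduce to the vertex case: slightly perturb $u$ or $v$ within the tiling's symmetry, or use the projection of the path onto $S$ with a monotonicity argument.

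\textbf{Main obstacle.} The hard step is the local exchange inequality, i.e., showing rigorously that circumnavigating $w$ is strictly longer. My first attempt would use combinatorial curvature (Gauss--Bonnet for the disk bounded by $P$ and the ``through-$w$'' path). Every interior vertex has degree $q$ and every face has length $p$ with $\frac1p+\frac1q\le\frac12$, so the curvature bound forces the boundary turning to be concentrated where $P$ bends. This contradicts $P$ being a geodesic that avoids a vertex the straight segment hits. Since the paper cites this as a known result, I would ultimately defer details to~\cite{KisfaludiBak25}.
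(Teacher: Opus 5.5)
The paper gives no proof of this lemma; it imports it from \cite{KisfaludiBak25}, so your final deferral is consistent with the paper. As a proof, however, your sketch has a genuine gap at exactly the step you call the main obstacle, and the reasons you give for that step are not valid.

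First, the statement is false for the Euclidean square grid. In $T_{4,4}$ take $u=(0,0)$ and $v=(2,2)$. The segment $S$ passes straight through the vertex $w=(1,1)$, yet $(0,0),(1,0),(2,0),(2,1),(2,2)$ is a shortest path avoiding it. So any correct argument must use $\frac1p+\frac1q<\frac12$ strictly and quantitatively. Your curvature paragraph invokes only $\frac1p+\frac1q\le\frac12$, so as written it would prove a false statement. Your claim that ``a tie would contradict the fact that $S$ passes straight through $w$'' is refuted by the same example: $S$ is straight at $w$, and going around the tile $[1,2]\times[0,1]$ ties with going through $w$. The only place $p$ and $q$ could enter your local picture is the comparison between the length $k(p-2)$ of a detour around $k$ consecutive tiles at $w$ and the length $2$ of the route through $w$. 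You never carry this out; you replace it with an appeal to an unspecified ``Dehn/small-cancellation'' argument.

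Second, even a strict local inequality would not suffice, because your exchange is local while the claim is global. A shortest path avoiding $w$ need not touch the star of $w$ at all. In $T_{4,4}$ with $u=(0,0)$, $v=(4,4)$, $w=(2,2)$, the path along the bottom and right sides avoids every tile incident to $w$. In such a case there is no local ``detour around $w$'' to replace, and excluding these paths in the hyperbolic case requires a global tool. Your Gauss--Bonnet suggestion presupposes a route through $w$ of length at most $|P|$, i.e.\ $d(u,w)+d(w,v)=d(u,v)$, which is itself part of what must be shown.

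The remaining steps also do not help:
\begin{itemize}
\item The wall step contributes nothing. A finite chain of tiles near $x$ does not separate $u$ from $v$, and the full strip along $L$ may contain $u$ or $v$. In any case, meeting both sides of a wall does not force a visit to $w$.
\item The perturbation reduction for the edge case fails. Automorphisms of the tiling map vertices to vertices and segments to segments, so they preserve whether $S$ meets $ww'$ in its interior. Also, $u$ and $v$ cannot be moved off $V(T_{p,q})$.
\item Projecting $P$ onto $S$ can at best show that $P$ stays close to $S$, not that it passes through $w$ or $w'$.
\end{itemize}

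So what you have is a heuristic plus the citation. A self-contained proof needs an argument in which strict negative curvature yields a strict length decrease for every path avoiding $w$ (resp.\ avoiding both $w$ and $w'$), not only for paths that hug $w$.
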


\begin{lemma}[Lemma 4.3 of~\cite{KisfaludiBak25}]\label{lm:single_tile}
    If a subgraph $H$ of $\til$ is convex, then all of its bounded faces are tiles of $\til$.
\end{lemma}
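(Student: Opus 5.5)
We prove Lemma~\ref{lm:single_tile} by contradiction. Suppose $H\subseteq\til$ is convex and has a bounded face $f$ that is not a single tile of $\tiling$.

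\textbf{Step 1: boundary and interior of $f$.} Since $H$ is a subgraph of the plane graph $\til$, the face $f$ is a union of tiles of $\tiling$ together with the edges and vertices of $\til$ lying strictly inside it. Since $f$ is not a single tile, at least two tiles lie inside $f$. Two adjacent tiles in $f$ share an edge or vertex of $\til$ that is not in $H$. So there is an edge or vertex of $\til$ in the interior of $f$; call such an element a \emph{chord} of $f$.

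\textbf{Step 2: pick a tile $\tau$ meeting both $\partial f$ and a chord.}
\begin{itemize}
\item If there is an interior vertex $w$ of $f$, walk in the interior until we reach a tile $\tau$ that contains an interior element of $f$ and also a vertex $u$ of $\partial f$.
\item Otherwise all vertices of tiles in $f$ lie on $\partial f$, and some edge $xy$ with $x,y\in\partial f$ lies inside $f$.
\end{itemize}
In the second case, $xy$ is a shortest path between $x,y\in V(H)$ (its length is $1$). Convexity then forces $xy\in E(H)$, contradicting that $xy$ lies in the interior of $f$. So the second case is impossible.

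\textbf{Step 3: the first case.} Here $f$ contains interior vertices. Let $W$ be the interior vertices of $f$, and consider a tile $\tau\subseteq f$ having both interior vertices and boundary vertices. Since $\tau$ is a $p$-cycle and $\partial f$ is a closed curve, the boundary vertices of $\tau$ split $\tau$ into arcs. Take a maximal arc $P$ of $\tau$ whose inner vertices lie in $W$, with endpoints $u,v\in V(H)$. Possibly $u=v$ if $\partial f$ touches $\tau$ once; in that case $f$'s boundary separates, and we choose instead the boundary vertices $u\neq v$ of $\tau$. Two such vertices exist since $\tau\subseteq f$ and $f\neq\tau$ with $f$ bounded.

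By Lemma~\ref{lm:path_around_tile}, every shortest $u$–$v$ path uses only vertices of $\tau$. Let $P'=\tau\setminus P$ be the complementary arc, and suppose $P$ is a shortest path. Then convexity forces $P\subseteq H$, contradicting that its inner vertices are interior to $f$. So we need that the shorter arc of $\tau$ between $u,v$ contains an interior vertex of $f$. This is arranged by choosing $u,v$ as the two consecutive boundary vertices along $\tau$ around the shorter gap. Concretely, choose $\tau$ and $u,v$ minimizing the length of an interior arc. If every interior arc were longer than $p/2$, the remaining boundary vertices of $\tau$ would be connected through $H$-edges along $\tau$, and $\tau$ itself would be bounded by $H$, i.e.\ be a face.

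\textbf{Main obstacle.} The delicate point is this last step: guaranteeing a tile $\tau$ with an interior arc that is a shortest path (length at most $p/2$), or handling the equality case of two shortest paths. For ties, Lemma~\ref{lm:path_around_tile} says both arcs are shortest, so convexity puts both in $H$, which again gives a contradiction. For longer arcs, I would use the complementary arc $P'$. If $P'\subseteq H$ is the shorter one, then $P'$ plus the edges of $H$ around $f$ enclose $\tau$. In that case I would instead pick a neighbouring tile inside $f$ sharing an interior edge with $\tau$, and induct on the number of tiles in $f$.
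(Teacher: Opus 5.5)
Your Steps 1 and 2 are fine. A bounded face $f$ that is not a tile must contain an open edge or a vertex of $\til$. An edge of $\til$ inside $f$ with both endpoints in $V(H)$ is ruled out because convex subgraphs are induced.

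\textbf{The gap is Step 3, and no tile-by-tile argument can close it.} For a tile $\tau\subseteq\overline f$, \Cref{lm:path_around_tile} gives only this: every arc of $\tau$ that runs through $f$ between consecutive vertices of $V(H)$ has length greater than $p/2$. Equivalently, $\tau$ meets $\partial f$ in at most one path of length less than $p/2$. This is fully consistent with convexity restricted to $V(\tau)$.

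Here is a concrete example. In $T_{3,7}$, let $H$ be the $7$-cycle on the neighbours $h_0,\dots,h_6$ of a vertex $x$ (indices modulo $7$).
\begin{itemize}
\item Each of the seven triangles of the face around $x$ meets $\partial f$ in the single edge $h_ih_{i+1}$.
\item Its other arc $h_ixh_{i+1}$ has length $2>3/2$, so no single tile yields a contradiction.
\item Yet $H$ is not convex. The violating pair is $h_0,h_2$, with shortest path $h_0xh_2$, and these two vertices do not lie on a common tile.
\end{itemize}
The same example refutes your sentence ``if every interior arc were longer than $p/2$, \ldots\ $\tau$ itself would be bounded by $H$'': none of the seven triangles is a face of $H$.

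Two further points in Step 3 are unsupported. A tile of $f$ need not contain two vertices of $\partial f$. The proposed induction on the number of tiles of $f$ has no stated invariant: $H$ is fixed, and removing a tile from $f$ does not produce a face of a convex graph.

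\textbf{What is missing is a global argument.} You need two vertices of $H$ joined by a shortest path of $\til$ through the interior of $f$, and these need not share a tile. One way to get them:
\begin{itemize}
\item Let $x$ be a vertex of $\til$ in $f$; Step 2 shows one exists.
\item Let $\ell$ be a line of reflective symmetry of $\tiling$ through $x$, for example the line containing an edge at $x$. It passes through infinitely many vertices of $\til$ on both sides of $x$.
\item Pick vertices $a,b\in\ell$ on opposite sides of $x$, outside the bounded set $\overline f$.
\item By \Cref{lm:kisfaudibak25_lemma3_3}, every shortest $a$--$b$ path $P$ in $\til$ passes through $x$.
\item Following $P$ from $x$ towards $b$, it must leave $\overline f$. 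Since $\til$ is a plane graph and $\partial f$ consists of vertices and edges of $H$, $P$ meets $V(H)$ at some vertex $b'$. Likewise, towards $a$ it meets $V(H)$ at some $a'$.
\item The subpath of $P$ from $a'$ to $b'$ is a shortest path of $\til$ containing $x\notin V(H)$, contradicting convexity.
\end{itemize}
Any bi-infinite geodesic of $\til$ through $x$ would serve equally well. This argument replaces your Step 3 entirely and, together with Steps 1--2, proves the lemma.
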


\begin{restatable}[Corollary of Lemma 5.5 of~\cite{KisfaludiBak25}, $\star$]{lemma}{boundary}\label{lm:conv_O(n)}
    If $H$ is an $n$-vertex connected subgraph of $\til$, then $\conv(H)$  has $O(n)$ vertices.
\end{restatable}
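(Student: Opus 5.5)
We have to guess what Lemma 5.5 of the cited paper states. Most likely it bounds the size of the convex hull in terms of the boundary, or a perimeter-type quantity. A plausible form: for a connected subgraph $H$ of $\til$, the convex hull $\conv(H)$ has $O(|E(H)|)$ vertices, or its boundary has length $O(n)$ and its number of tiles is linear in the boundary length. The plan is to reduce to whatever form is given and use planarity plus the exponential expansion of hyperbolic tilings.

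Step 1 (boundary bound). I will argue that the boundary walk of $\conv(H)$ has length $O(n)$. This is the content I expect Lemma 5.5 of~\cite{KisfaludiBak25} to supply, for instance as a statement that the outer boundary of $\conv(H)$ has length at most a constant times the length of a closed walk around $H$. Since $H$ is connected with $n$ vertices, it has a spanning tree. An Euler tour of that tree around its planar embedding gives a closed walk $W$ of length $2(n-1)$ that encloses $H$. So the boundary of $\conv(H)$ has $O(n)$ vertices.

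Step 2 (interior from boundary). By \Cref{lm:single_tile}, every bounded face of $\conv(H)$ is a tile. Hence $\conv(H)$ is the boundary together with a union of tiles of $\tiling$ enclosed by a closed walk of length $L=O(n)$. I then use the linear isoperimetric inequality of hyperbolic tilings: a region that is a union of $F$ tiles has boundary length at least $cF$, for some $c>0$ depending on $p,q$. For example, by Euler's formula and $\frac1p+\frac1q<\frac12$, a finite union of tiles satisfies $\#\text{boundary edges}\ge \epsilon\cdot\#\text{tiles}$. This gives $F=O(L)$ tiles. Every vertex of $\conv(H)$ either lies on a tile, and each tile has $p$ vertices, or on a tree-like part, which is part of the boundary walk. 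So the number of vertices is $O(pF+L)=O(n)$, treating $p,q$ as constants.

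Step 3 (main obstacle). The hardest part is making Step 1 rigorous if Lemma 5.5 instead speaks about the hull of a closed walk, or about a simply connected region. In that case I would take $W$ as above and show that $\conv(H)=\conv(V(W))$, which holds since $W$ and $H$ have the same vertex set. I would then apply the lemma to $W$ and handle possible cut vertices of the hull by decomposing it into blocks, each bounded by a sub-walk. The total length of these sub-walks is still $O(n)$, and summing the block bounds finishes the proof.
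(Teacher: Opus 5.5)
\textbf{There is a genuine gap: Step 1 assumes exactly what this corollary has to prove.}

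Your Step 2 is fine, and it is in fact the part the paper does not reprove. By \Cref{lm:single_tile}, every bounded face of $\conv(H)$ is a tile. So each block of $\conv(H)$ that contains tiles is a disk whose interior vertices have full degree $q$. The Euler-characteristic count with $\frac1p+\frac1q<\frac12$ then bounds the number of tiles linearly in the block's boundary length. Summing over blocks gives $O(L)$ vertices once the outer boundary has length $L$. The paper imports this part from the proof of Lemma~5.5 of~\cite{KisfaludiBak25}.

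The boundary bound in~\cite{KisfaludiBak25} (their Lemma~5.3) is stated in terms of a different parameter, not the number $n$ of vertices of $H$. The paper's whole proof therefore consists of showing that the outer boundary of $\conv(H)$ has length $O(n)$.

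You do not prove this. You postulate that the cited lemma bounds the hull boundary by a constant times the length of a closed walk around $H$, and then exhibit a walk $W$ of length $2(n-1)$. Step 3 again reduces to ``apply the lemma to $W$''. Nothing in your argument relates $|W|$ to the length of the hull boundary. That relation is where the convexity of the hull actually has to be used.

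The missing idea, as in the paper, is a stretch-by-stretch comparison:
\begin{enumerate}
\item Cut the outer boundary walk of $\conv(H)$ at the vertices of $H$ lying on it.
\item For two consecutive such vertices $u,v$, the hull-boundary stretch from $u$ to $v$ is no longer than the path from $u$ to $v$ in $H$ along the outer face of $H$.
\item Summing over all stretches charges the whole hull boundary to the outer face walk of $H$. That walk traverses each edge of $H$ at most twice, so it has length $O(n)$.
\end{enumerate}
With this in place, your Step 2 finishes the proof.
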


\subsection{Nooses and noose hierarchies.}

\subparagraph*{Nooses and noose hierarchies.}

Let $H$ be a plane graph embedded in the unit sphere $\mathbb{S}^2$. A \EMPH{noose}~\cite{DornPBF10} of $H$ is a directed closed Jordan curve $\nu$ that intersects $H$ only on vertices of $H$  
and intersects each face of $H$ at most once. Notice that $\mathbb{S}^2\setminus\nu$ has two connected components. The \EMPH{interior} of $\nu$, denoted by $\mathrm{Int}(\nu)$, is the component on the right side of $\nu$. The \EMPH{exterior} of $\nu$, denoted by $\mathrm{Ext}(\nu)$, is the component on the left side of $\nu$. 
We denote $\reg(\nu)$ the closure of $\mathrm{Int}(\nu)$, i.e. 
$\reg(\nu)=\mathbb{S}^2\setminus \mathrm{Ext}(\nu)$.
We define the interior of the empty noose to be the entire $\mathbb{S}^2$.
We denote $T|_\nu$ the subgraph of $\til$ given the vertices and edges in $\reg(\nu)$. Note here that there may be edges induced by the vertices in $\reg(\nu)$ that are outside $\reg(\nu)$ and thus outside $T|_\nu$. 
The \EMPH{complexity} $|\nu|$ of a noose $\nu$ of $H$ is the number of vertices of $H$ intersected by $\nu$. 

A \EMPH{sphere cut decomposition}~\cite{DornPBF10} of $H$ is a rooted binary tree, where each node corresponds to a noose of $H$ such that the empty noose corresponds to the root, and every noose $\nu_P$ containing at least 3 vertices of $H$ in its region $\reg(\nu_P)$ has two child nooses $\nu_L$ and $\nu_R$, such that $\nu_L$ and $\nu_R$ are interior disjoint and $\reg(\nu_P)=\reg(\nu_L)\cup\reg(\nu_R)$, where both $\reg(\nu_L)$ and $\reg(\nu_R)$ contain strictly less vertices of $H$ than $\reg(\nu_P)$. See \Cref{fig:basic_children} (ii) for an illustration.

\begin{figure}
    \centering    \includegraphics{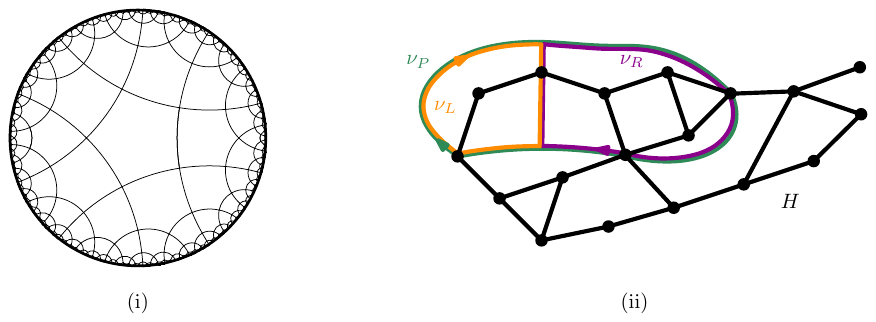}
    \caption{(i) The hyperbolic $\{5,4\}$-tiling $\cT_{5,4}$ in the Poincaré disk model. All angles are equal and all edges are (straight) line segments of the same hyperbolic length. The Poincaré disk model preserves angles, but distorts distances.
    (ii) Three nooses in a sphere cut decomposition of a graph $H$. The noose $\nu_L$ and $\nu_R$ are child nooses of the noose $\nu_P$. The region $\reg(\nu_P)$ contains $8$ vertices of $H$, and the regions $\reg(\nu_L)$ and $\reg(\nu_R)$ contain $3$ and $6$ vertices of $H$, respectively.}
    \label{fig:basic_children}
\end{figure}

To solve \textsc{Tiling (Induced) Subgraph Recognition} where $T_{p,q}$ is a hyperbolic tiling graph, we divide the problem into smaller subproblems with sphere cut decompositions. 
To achieve this, we connect nooses and sphere cut decompositions, which are defined in $\mathbb{S}^2$, with $\til$ which is defined in $\mathbb{H}^2$. 
We map a sphere cut decomposition from the sphere $\mathbb{S}^2$ to the hyperbolic plane $\mathbb{H}^2$ by puncturing one point $p$ on the sphere. Henceforth, when referring to nooses or sphere cut decompositions, we assume the mapping described above.

\section{Structural properties of connected subgraphs of \texorpdfstring{$\til$}{T}}\label{sec:Struct}

In sections \ref{sec:Struct} and \ref{sec:Algo}, we assume $\til$ to be a hyperbolic $\{p,q\}$-tiling graph, that is, $\frac1p+\frac1q<\frac12$.

\subsection{Normalized nooses.} 
For any noose $\nu$, we define a normalization 
such that, while there are an infinite number of nooses of some complexity $k$, there are only a bounded number of normalized nooses of complexity $k$. 
We first state a lemma that will be used to build the subcurve of a normalized noose through the unbounded face.
Let $\tau$ be a tile in $\tiling$. We denote $V(\tau)$ the vertices of $\til$ that represent the vertices of the polygon $\tau$. 
We call the centroid of $\tau$ the \EMPH{tile center} of $\tau$. 

\begin{restatable}[$\star$]{lemma}{bisectorray}\label{lm:angle_bisector_ray}
    Let $H$ be a graph embedded in $\til$. Given a vertex $v\in V(\conv(H))$ on the unbounded face of $\conv(H)$ and a tile $\tau\in \tiling$ on the unbounded face of $\conv(H)$ such that $v\in V(\tau)$, the bisector ray $r$ of the angle of $\tau$ at $v$ is disjoint from $\conv(H)\setminus v$.
\end{restatable}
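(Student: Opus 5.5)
The plan is to argue by contradiction using the reflection symmetry of $\tiling$ along the line containing $r$. Let $L$ be the hyperbolic line containing $r$. Since $r$ bisects the angle of the regular $p$-gon $\tau$ at its vertex $v$, $L$ passes through the tile center of $\tau$. It is therefore a mirror axis of $\tiling$: the reflection $\sigma$ across $L$ maps $\tiling$ to itself, fixes $v$, and fixes $\tau$ setwise. Note also that $L$ meets the $1$-skeleton only in vertices of $\til$, in edges lying on $L$, and in midpoints of edges perpendicular to $L$. In particular, $\sigma$ swaps the two endpoints of any edge that $L$ crosses transversally.

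Suppose $r$ meets $\conv(H)$ in some point other than $v$, and let $x$ be the first such point along $r$. I will then pick target vertices as follows.
\begin{itemize}
\item If $x$ is a vertex of $\til$, I set $w=x$. Then $w$ lies on $L$ and $\sigma(w)=w$.
\item If $x$ lies in the interior of an edge of $\conv(H)$, that edge is perpendicular to $L$ with endpoints $w$ and $w'=\sigma(w)$, both in $\conv(H)$.
\end{itemize}
No edge of $\til$ leaves $v$ along $r$, because $r$ immediately enters the interior of $\tau$. Hence $x\neq v$ lies strictly beyond $\tau$'s interior near $v$.

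In the first case I take a shortest path $P$ from $v$ to $w$ in $\til$. Its reflection $\sigma(P)$ is also a shortest $v$–$w$ path, and by convexity both lie in $\conv(H)$. Lemma~\ref{lm:kisfaudibak25_lemma3_3}, applied to the segment $vw\subset L$, shows that these paths are pinned to $L$. In the second case I take a shortest $v$–$w$ path $P$; then $\sigma(P)$ is a shortest $v$–$w'$ path. Both lie in $\conv(H)$ (as $v,w,w'\in V(\conv(H))$), and I close them up with the edge $ww'$.

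In either case I obtain a closed walk $W=P\cup\sigma(P)$ (plus possibly $ww'$) in $\conv(H)$ that is symmetric under $\sigma$. Next I extract a simple cycle $C\subseteq W$ that crosses $L$ only at $v$ and at one further point $y$ of $r$ (with $y$ at or before $x$). To do this, I take the first point $y$ after $v$ along $P$ that lies on $L$, or on the edge $ww'$. I then use the subpath of $P$ up to $y$ together with its mirror image; these two pieces lie on opposite sides of $L$ because they are mirror images and meet $L$ only at the endpoints. The cycle $C$ encloses the open subsegment of $r$ between $v$ and $y$, hence an initial part of $r$. That initial part lies in the interior of $\tau$, so $\tau$ lies in a bounded face of $\conv(H)$. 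This contradicts the hypothesis that $\tau$ lies in the unbounded face of $\conv(H)$. Alternatively, the contradiction can be phrased through Lemma~\ref{lm:single_tile}, since that bounded face must be a tile.

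The main obstacle is this last step: rigorously ensuring that the symmetric walk yields a Jordan curve whose interior contains the initial part of $r$. One issue is that $P$ might touch $L$ at $v$ again, or run along edges contained in $L$. Another is that the first meeting point $y$ might precede $x$, so that $x$ was not actually first. I would handle this by choosing $y$ as the first point of $(P\cup\sigma(P))\cap r$ after $v$, which is itself in $\conv(H)$. If $y\ne x$, this contradicts the minimality of $x$ unless $y=x$; otherwise I take $x:=y$ from the start. I would also check, via the local geometry at $v$, that $P$ leaves $v$ strictly on one side of $L$ (no edge at $v$ points into $\tau$). That guarantees the two mirrored halves bound a region containing the germ of $r$.
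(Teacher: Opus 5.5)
Your reflection set-up is correct, including the fact that every edge crossing $L$ transversally has the form $a\sigma(a)$. But the step you flag as the main obstacle is left open, and it carries the whole contradiction.

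\textbf{The gap.} The claim that the piece of $P$ up to $y$ and its mirror image ``meet $L$ only at the endpoints'' is exactly what needs proof, and neither of your choices of $y$ gives it.
\begin{itemize}
\item If $y$ is the first point of $P$ on $L$ after $v$, nothing you have shown stops $y$ from lying on the ray $L\setminus r$ behind $v$. Then $P[v,y]\cup\sigma(P[v,y])$ is a Jordan curve, but it encloses a segment of $L$ behind $v$. The germ of $r$ lies outside it, so there is no contradiction.
\item If $y$ is the first point on $r$, then $P[v,y]$ may already have crossed $L$ behind $v$, at a vertex or through an edge $a\sigma(a)$. The two halves then need not lie in opposite half-planes, and their union need not be a simple closed curve around the germ of $r$.
\end{itemize}
The supporting remarks do not help either. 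The local check at $v$ constrains only the first edge, and for odd $q$ there is in fact an edge at $v$ lying on $L$, pointing away from $\tau$. The worry that $y$ could precede $x$ is vacuous, since $P\subseteq\conv(H)$. Lemma~\ref{lm:kisfaudibak25_lemma3_3} only says shortest paths visit the vertices on the segment, not that they stay near $L$.

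\textbf{How to repair it.} A topological argument works for both of your cases at once.
\begin{itemize}
\item Pick $z\in(v,x)$ in the interior of $\tau$. Move $z$ to the centre of the Poincar\'e disk with $L$ a diameter, so that $\sigma$ becomes a Euclidean reflection.
\item Let $\theta$ be a continuous lift of the argument of $P$ about $z$; in the edge case, first append the half-edge from $w$ to $x$.
\item The path runs from one side of $z$ on $L$ to the other, so $\theta$ changes by an odd multiple of $\pi$. The mirror image, traversed backwards, changes it by the same amount.
\item Hence $W$ winds an odd, and so nonzero, number of times around $z$. Therefore $z$, and with it $\tau$, cannot lie in the unbounded face of $\conv(H)\supseteq W$.
\end{itemize}
This needs no control over where $P$ meets $L$.

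In the vertex case you can instead argue metrically. Distances from $v$ increase by one along $P$, while $d(v,a)=d(v,\sigma(a))$, so $P$ uses no edge crossing $L$. Lemma~\ref{lm:kisfaudibak25_lemma3_3}, applied to the segment from $u$ to $w$ (which contains $v$), shows that $P$ visits no vertex $u\neq v$ behind $v$. So the first point of $P$ on $L$ after $v$ is a vertex of $r$, and your Jordan-curve argument goes through.

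\textbf{Comparison with the paper.} The paper avoids topology altogether. Lemma~\ref{lm:kisfaudibak25_lemma3_3} forces the first vertex of $\til$ on $r$ into $\conv(H)$: for even $p$ this is the vertex of $\tau$ opposite $v$, and for odd $p$ the vertex of the neighbouring tile $\tau'$ opposite the shared edge. Lemma~\ref{lm:path_around_tile} plus the mirror symmetry then puts the whole boundary of $\tau$ (resp.\ $\tau\cup\tau'$) into $\conv(H)$. Your route, once repaired, avoids the parity split on $p$. It also covers the case where $r$ meets $\conv(H)$ only in the interior of an edge, which the paper's proof, phrased for a vertex $w$, does not address explicitly.
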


We provide the complete description of normaized nooses in the appendix. The construction follows closely that of~\cite{Blasius24}, but differs on the unbounded face. 
A \EMPH{normalized curve}, or \EMPH{generalized polygon} in $\mathbb{H}^2$ is a plane cycle, where each vertex is either a vertex of $\til$, the center of a tile in $\til$, or a point in the unbounded face, which can be an \EMPH{ideal point}, that is, a point on the ideal boundary of $\ma H^2$. 
A \EMPH{normalized noose} $\nu$ of a convex subgraph $H$ of $\til$ is a normalized curve that 
is also a noose of $H$ and has a particular construction. Our construction through bounded faces alternates between vertices of $\til$ and tile centers.

Our construction through the unbounded face requires at most four vertices to encode, instead of the $O(n)$ vertices required for a naive construction following the same approach as for bounded faces. 
Let $\nu_\infty$ be the subcurve of a normalized noose $\nu$ through the unbounded face of $\conv(H)$.
Then $\nu_\infty$ contains the angle-bisector rays of particular tiles outside $\conv(H)$ that are connected by an ideal edge on the boundary. If the rays intersect, we instead include the corresponding \emph{prerays} (i.e., subsegments of the rays) to our normalized noose. If the rays do not intersect, we include the ideal points those rays intersect, as well as an ideal arc between those points.
By \Cref{lm:angle_bisector_ray}, these rays are disjoint from $\conv(H)$.
See Figure~\ref{fig:child_noose} for an example.

\begin{figure}[ht]
  \centering
    \includegraphics{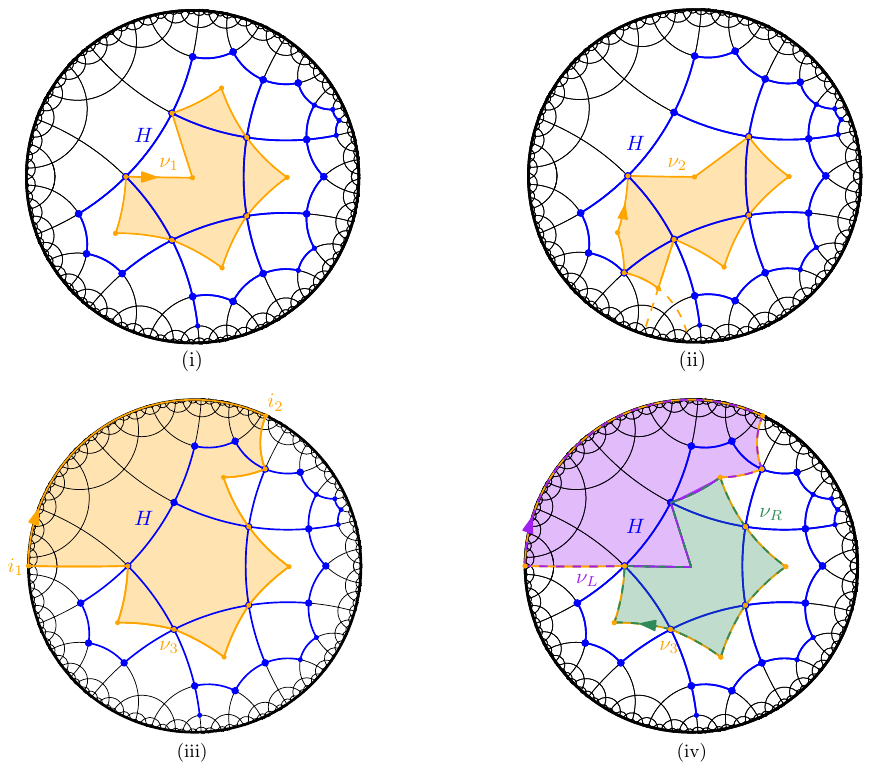}
    \caption{Each figure depicts a convex plane graph $H$ embedded in $T_{5,4}$ (in black) as well as: 
    (i) a normalized noose $\nu_1$ of $H$, such that $\nu_1=\nu_1^-$, (ii) a normalized noose $\nu_2$ of $H$ with pre-rays from two intersecting rays (the continuations of the rays are dashed), (iii) a normalized noose $\nu_3$ with two ideal vertices $i_1$ and $i_2$, (iv) two child nooses $\nu_L,\nu_R$ of $\nu_3$.}
    \label{fig:child_noose}
\end{figure}

The \EMPH{complexity} of a normalized noose $\nu$ of a convex graph $H$ is the number of vertices of $H$ that are vertices of $\nu$.
A normalized noose with complexity 0 is the \EMPH{empty noose}, and its interior is the entire plane $\mathbb{H}^2$. 
Note that according to \Cref{lm:single_tile}, each bounded face of $\conv(H)$ is a single tile, thus, 
for a noose $\nu$ of complexity $|\nu|$, the corresponding normalized noose has complexity $|\nu|$ and it follows from construction that its description contains either $2|\nu|$ or $2|\nu|+1$ vertices\footnote{Of these vertices, $|\nu|$ are vertices of $H$; either $|\nu|$ or $|\nu|-1$ are tile centers depending on whether $\nu$ visits the unbounded face; and at most $2$ are (possibly ideal) vertices on the unbounded face of $\conv(H)$.}. Since the complexity of a normalized noose, and the number of vertices in its description are within a constant factor of each other, we can use them interchangeably with big $O$ notation. 
We call a sphere cut decomposition consisting only of normalized nooses a \EMPH{normalized sphere cut decomposition}. 

We encode a noose $\nu$ such that we fix a distinguished origin vertex and encode $\nu$ as a sequence of directions.
See the appendix for the full description of the encoding of candidate nooses.
By abuse of notation, $\nu$ denotes both the candidate noose and its encoding as a sequence. 

\subsection{Compatibility.}

Next, we will define compatibility of two nooses with regards to a third noose. Compatibility is a notion that is close to the relation between parent and child nooses, however it is a characterization that works without a sphere cut decomposition. This characterization will be useful to decide whether two nooses could be children of a third noose in some sphere cut decomposition under some orientation-preserving isometries.  

We say that two closed Jordan curves $\gamma_L,\gamma_R$ in $\mathbb{S}^2$ are \EMPH{compatible} with regards to a curve $\gamma_P$, if there exist orientation-preserving isometries $\psi_L$, $\psi_R$ of $\mathbb H^2$ such that
\begin{enumerate}
    \item $\reg(\gamma_P)=\psi_L(\reg(\gamma_L))\cup\psi_R(\reg(\gamma_R))$, and
    \item there are exactly two vertices $v_1,v_2$ such that $\gamma_P\cap\psi_L(\gamma_L)\cap\psi_R(\gamma_R)= \{v_1,v_2\}$ and the symmetric difference of $\psi_L(\gamma_L)$ and $\psi_R(\gamma_R)$ is exactly $\gamma_P\setminus \{v_1,v_2\}$.
\end{enumerate}

When the two isometries are clear from context, we will simply use the notation $\gamma^*_L$ and $\gamma^*_R$ as shorthands for $\psi_L(\gamma_L)$ and $\psi_R(\gamma_R)$, respectively.

For a noose $\nu$, let $\nu^-$ be the open subcurve of $\nu$ consisting of the straight line segments of $\nu$ between a vertex of $\til$ and a tile center. So, if $\nu$ has no ideal points and no ray prefix, $\nu^-=\nu$. For the rest of this work, let $V_T(\nu)$ be the set of vertices of $V(T)$ associated with~$\nu^-$. 

We call a \EMPH{candidate noose} a normalized curve 
up to orientation preserving isometries of $\mathbb H^2$ (see Figure~\ref{fig:candidate noose}). Note that all normalized nooses of complexity $k$ can be obtained by applying some orientation-preserving isometry to some candidate noose of complexity $k$. 

\begin{figure}[ht]
  \centering
    \centering
    \includegraphics[width=0.35\textwidth]{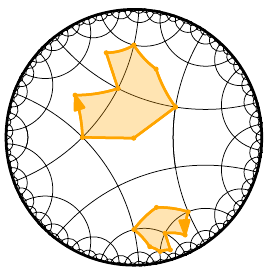} 
    \caption{The two candidate nooses depicted here are equivalent as one can be obtained from the other via an orientation-preserving isometry.}
    \label{fig:candidate noose}
\end{figure}

\subsection{Bounds on normalized nooses and sphere cut decompositions.}

Normalized sphere cut decompositions are always associated with a convex graph. However, our algorithm never has access to that graph, as it would require embedding the graph in $T_{p,q}$, which in turn requires solving \textsc{Tiling (Induced) Subgraph Recognition}. Instead, we enumerate all candidate nooses and try to construct a normalized sphere cut decomposition.

\begin{restatable}{lemma}{numbercandnooses}\label{lm:1_of3}
    There are $O((pq)^{k})$ candidate nooses of complexity $k$ or less with a fixed first vertex and first neighboring edge. Furthermore, we can enumerate all candidate nooses of complexity $k$ or less in $O((pq)^{k})$ time.
\end{restatable}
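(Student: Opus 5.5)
The plan is a direct counting argument on the encoding of a normalized noose as a sequence of local moves, followed by a depth-first enumeration. Fix the origin vertex $v_0$ of $\til$ and the first edge $e_0$ leaving it. As noted after the definition of normalized nooses, a normalized noose of complexity $j$ is described by $2j$ or $2j+1$ vertices. In the bounded part these alternate between vertices of $\til$ and tile centers, and in the unbounded part there are at most two extra (possibly ideal) vertices. So it suffices to bound the number of choices for each step.

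\textbf{Per-step choices.} Consider the transition from a vertex $v$ of $\til$ to the next tile center. The noose passes through a tile incident to $v$, and there are $q$ such tiles, so the choice is given by the index of that tile in the rotation system at $v$. Consider next the transition from a tile center $c$ to the next vertex of $\til$. That vertex is one of the $p$ vertices of the tile, specified by an offset along its boundary. Thus each (vertex, tile center) pair of steps costs at most $pq$ choices, and because $v_0$ and $e_0$ are fixed, every index is well defined in terms of the local orientation.

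\textbf{Unbounded face.} The subcurve $\nu_\infty$ through the unbounded face is determined by two things. The first is the choice of the two tiles whose angle-bisector rays are used, at the two vertices of $H$ where the noose enters and leaves the unbounded face; each is a choice among at most $q$ tiles at a vertex already listed in the sequence. The second is a constant amount of data: whether the rays intersect, in which case we take the prerays, or not, in which case we take the ideal points and the ideal arc. Both the prerays and the ideal points are geometrically determined by the two rays, so this part contributes only an $O(q^2)=O((pq)^{O(1)})$ factor. It is absorbed by charging it to the missing tile center, since a noose through the unbounded face has only $j-1$ tile centers.

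\textbf{Total count and enumeration.} A candidate noose of complexity $j$ is therefore at most $(pq)^{j}\cdot O(1)$ distinct sequences. Summing over $j\le k$ gives a geometric series bounded by $O((pq)^k)$, because $pq\ge 2$ here; in the hyperbolic case $pq\ge 12$. Sequences that fail to close up or to be simple only reduce the count. Enumeration proceeds by depth-first generation of all sequences of length up to $2k+1$, with amortized $O(1)$ work per generated symbol, giving $O((pq)^k)$ total time.

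\textbf{Main obstacle.} The main difficulty is making the per-step encoding rigorous. Two points need care. First, the tile-center-to-vertex step must indeed have only $p$ options; this holds because normalized nooses use straight segments inside a single tile, which follows from the normalization, although one must check that consecutive vertices in the construction are never joined through an unbounded-face tile. Second, the unbounded-face portion must be encoded with $O(1)$ symbols rather than an arbitrary hyperbolic segment. Both points rely on the precise normalization described in the appendix. Closing the curve and checking simplicity can be done at the end without affecting the bound. Distinct isometry classes are counted at most once, since we fix $v_0$ and $e_0$, and overcounting only helps the upper bound.
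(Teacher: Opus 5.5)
Your proposal is correct and follows essentially the same route as the paper's proof: $pq$ choices per (tile center, vertex) step, $q$ choices for each of the two bisector rays with a unique way to join them, and enumeration by building the direction sequences step by step and discarding those that are not simple closed curves. One small point, which the paper's own count shares: for nooses through the unbounded face you actually get $(pq)^{j-1}q^2=(pq)^j\cdot\frac{q}{p}$, so charging the ray factor to the missing tile center only works when $q\le p$. In general, use the fixed first edge, i.e.\ rotate about $v_0$ so that the first tile is also fixed; this removes one factor of $q$ and gives the clean $(pq)^j$ bound.
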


\begin{proof}
    A candidate noose $\nu$ of complexity $k$ intersects $k$ vertices of $\til$. For a noose segment from a vertex $v\in\til$ to the next vertex of $\til$, there are $q$ possible tile centers that can be the next vertex of $\nu$. After that tile center, there are $p$ vertices of $\til$ that can be the next vertex of $\nu$. So, there are $pq$ possible noose segments from $v$ to the next vertex in $V_T(\nu)$. 
    If $\nu$ has a segment through infinity, both ends of $\nu^-$ have one ray in one of $q$ possible directions, and there is only one way to connect those rays. So, there are $O((pq)^{k})$ candidate nooses of complexity $k$. We can enumerate a superset $S$ of all candidate nooses in $O((pq)^{k})$ time by following the construction described in this proof. 
    Every curve in $S$ that intersects itself during construction can be cut at the intersection to form a candidate noose of complexity less than $k$. As the construction is iterative, we will output all nooses of length less than $k$.
    Every open curve in $S$ 
    can be omitted from $S$, thus forming a set of all candidate nooses of complexity $k$ or less.
\end{proof}

We get the following bound for the treewidth of $\conv(H)$ from~\cite{KisfaludiBak25}, which will let us bound the complexity of nooses we consider.

\begin{lemma}[Theorem 1.3 of~\cite{KisfaludiBak25}]\label{lm:bound_on_treewidth}
    Let $H\subseteq\til$ be an $n$-vertex graph. The graph $\conv(H)$ has treewidth at most $\max\{12,O(\log \frac{n}{p+q})\}$.
\end{lemma}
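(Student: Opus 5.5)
The plan is to bound the outerplanarity of $\conv(H)$ and then use the classical fact that a $k$-outerplanar graph has treewidth at most $3k-1$. Throughout I will work with $H$ connected, which is the case used by the algorithm. For disconnected $H$ I would either appeal directly to the cited Theorem~1.3 of~\cite{KisfaludiBak25} or treat the components separately.

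First I will reduce the size. By \Cref{lm:conv_O(n)}, $m:=|V(\conv(H))|=O(n)$, so it suffices to prove treewidth $\max\{12,O(\log\frac{m}{p+q})\}$. By \Cref{lm:single_tile}, every bounded face of $C:=\conv(H)$ is a tile of $\tiling$. Hence every $2$-connected block of $C$ is the boundary-plus-interior of a topological disk built from $p$-gons whose interior vertices all have degree $q$. Treewidth is the maximum over blocks, so I may assume $C$ is such a disk of tiles, with $A$ tiles and $m$ vertices.

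The core step is a \emph{linear isoperimetric inequality} for such tile-disks. I will show that the outer boundary of $C$ contains at least a constant fraction $c>0$ of all vertices of $C$, with $c$ independent of $p,q$. To do this I count angles via Euler's formula, a combinatorial Gauss--Bonnet argument. Each interior vertex has $q$ incident tile corners, while $p$-gons with $\frac1p+\frac1q<\frac12$ have combinatorial curvature bounded away from zero. Summing curvature over interior vertices and faces gives $\#\text{interior vertices}\le C_0\cdot\#\text{boundary vertices}$ for an absolute constant $C_0$. The same count also shows that the boundary has at least $p+q$ vertices, up to constants, as soon as there is any interior vertex.

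Next I peel layers. Remove the outer-face vertices of $C$; what remains is again a union of tile-disks, one per block of the induced subgraph, and the inequality applies to each. So every peeling step removes at least a $\frac{1}{1+C_0}$ fraction of the current vertices. The process ends once the remaining graph has no interior vertex, which happens once fewer than about $p+q$ vertices remain. This gives $O(1+\log\frac{m}{p+q})$ layers, i.e.\ $C$ is $k$-outerplanar with $k=O(1+\log\frac{m}{p+q})$. The bound $3k-1$ then yields the claim, with the additive constant absorbed into the $12$.

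The main obstacle I expect is the isoperimetric step in the peeled graphs. After removing the outer layer, the remainder need not be a nice disk: it can be pinched at cut vertices, or contain dangling edges and tiles that are no longer faces of a convex graph. The tool I would try first is \Cref{lm:path_around_tile} together with the fact that convex subgraphs have no ``holes'' and no reflex boundary turns beyond what a single tile allows. I would make the curvature count robust to these degenerate pieces by charging each tile to boundary corners of small (at most $2$) reflex type, rather than relying on convexity of the peeled graph itself.
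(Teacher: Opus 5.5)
\textbf{Connected case: correct, by a different route.} For connected $H$ your argument is correct. It differs from the paper, which gives no proof here and imports Theorem~1.3 of~\cite{KisfaludiBak25} as a black box. Your Euler count gives a constant independent of $p,q$. Take a tile-disk with $V_i$ interior vertices, $V_b$ boundary vertices and $F$ tiles. Euler's formula gives $2V_i+V_b-(p-2)F=2$, and counting tile corners gives $qV_i\le pF-V_b$. Hence $V_b\ge(1-\frac2q)V_b\ge 2p(\frac12-\frac1p-\frac1q)F\ge\frac{pF}{21}\ge\frac{V}{21}$, using that $\frac12-\frac1p-\frac1q$ is minimized at $\frac1{42}$ by $\{3,7\}$ and $\{7,3\}$. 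An interior vertex forces $F\ge q$, hence $V_b\ge pq/21$.

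The obstacle you anticipate does not arise, so no reflex-corner charging is needed. The count never uses convexity. After deleting the outer cycle of a tile-disk $D$, every cycle of the remainder lies in the open disk bounded by that outer cycle. It therefore encloses a full patch of the tiling, so every bounded face of the remainder is still a tile. Its blocks are again tile-disks or bridges. Cut vertices cost at most a factor $2$, since a component has fewer blocks than outer-layer vertices.

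Peeling thus gives $k=O(1+\log\frac{n}{p+q})$ layers and treewidth at most $3k-1$. This is exactly the form used in \Cref{thm:branch_width}. Your route buys an elementary, self-contained proof for the case the algorithm needs; the citation buys the full statement.

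\textbf{Where it falls short of the literal statement.} There are two places.

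First, the fallback of treating components separately is invalid. $\conv(H)$ contains the shortest paths between different components, so it is not the union of the component hulls. Moreover, $|V(\conv(H))|$ is not bounded by any function of $n$: two vertices at distance $D$ already force $D+1$ hull vertices. So a bound of the form $O(1+\log\frac{|V(\conv(H))|}{p+q})$ says nothing about disconnected $H$. That case needs a size measure other than the vertex count of the hull, or the citation.

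Second, the additive constant cannot simply be absorbed into the $12$. The bound $\max\{12,c\log\frac{n}{p+q}\}$ requires treewidth at most $12$ when $n$ is only slightly larger than $p+q$. Your constants (the $\frac1{21}$, the factor $2$, and the unspecified constant of \Cref{lm:conv_O(n)}) do not certify such an explicit value.

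What you actually prove is $\mathrm{tw}(\conv(H))=O(1+\log\frac{n}{p+q})$ for connected $H$, which is all the paper uses.
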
 

Now, together with \Cref{lm:bound_on_treewidth} we get the following lemma which ensures that any subgraph of $T_{p,q}$ has a sphere cut decomposition with nooses of bounded size.

\begin{restatable}{lemma}{boundednormspherecut}\label{thm:branch_width}
    Let $H$ be an $n$-vertex graph isomorphic to a subgraph of $\til$. Then there is a normalized sphere cut decomposition $\cD$ of $\conv(H)$ where each noose has complexity $O(1+\log \frac{n}{p+q})$. Moreover, $\cD$ is also a sphere cut decomposition of $H$.
\end{restatable}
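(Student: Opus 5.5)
We take $H$ embedded in $\til$ (identifying $H$ with its image) and set $C=\conv(H)$. By \Cref{lm:conv_O(n)}, $C$ has $O(n)$ vertices. By \Cref{lm:bound_on_treewidth}, $C$ has treewidth $O(1+\log\frac{n}{p+q})$; the bound in that lemma is stated for $n$-vertex graphs, so we apply it to $C$ with $N=O(n)$ vertices, which only changes constants. Next we pass from treewidth to branchwidth, using $\mathrm{bw}(C)\le \mathrm{tw}(C)+1$ (Robertson--Seymour). $C$ is a connected plane graph, being convex. Assuming it is bridgeless or otherwise handled, the theorem of Seymour--Thomas/Dorn et al.~\cite{DornPBF10} gives a sphere cut decomposition of $C$ whose nooses meet at most $\mathrm{bw}(C)$ vertices. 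For degenerate pieces (bridges, small components after cutting) we follow~\cite{DornPBF10}, which allows the branch decomposition to be converted into one with nooses of the same width. This yields a sphere cut decomposition of $C$ with noose complexity $O(1+\log\frac{n}{p+q})$.

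The second step is normalization: each noose $\nu$ is replaced by a normalized noose $\bar\nu$ through the same vertices of $C$, in the same cyclic order, with the same partition of $C$. On a bounded face of $C$, which by \Cref{lm:single_tile} is a single tile $\tau$, the arc of $\nu$ between two consecutive vertices $u,w\in V(\tau)$ is replaced by the two segments $u\to$ center$(\tau)\to w$. Since $\nu$ meets each face at most once, these replacements occur in distinct tiles and so do not cross. On the unbounded face we take, at the two endpoint vertices $u,w$, tiles outside $C$ in the appropriate angular sectors and use their bisector rays. By \Cref{lm:angle_bisector_ray} these rays avoid $C\setminus\{u,w\}$. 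We join them by pre-rays if they intersect, or by an ideal arc otherwise.

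Two things must then be checked. First, $\bar\nu$ is a Jordan curve isotopic to $\nu$ relative to $C$, so $\reg(\bar\nu)\cap C=\reg(\nu)\cap C$. Second, parent--child relations survive: if $\reg(\nu_P)=\reg(\nu_L)\cup\reg(\nu_R)$ with disjoint interiors, the normalized versions satisfy the same relation. The second point is the main obstacle. We plan to handle it by making normalization canonical per face: within each face, the arcs of the three nooses are determined by their endpoint vertices and a consistent choice of tile center or rays, so shared arcs normalize identically. On the unbounded face, distinct nooses passing through it could in principle produce crossing rays. We resolve this by choosing, at a vertex $v$, the tile adjacent to the relevant angular sector of the outer face determined by the noose's arc, so that rays of different nooses from the same vertex either coincide or are nested. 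Parent and children then share rays exactly where their original arcs shared endpoints. Any resulting non-crossing issues we handle by a slight perturbation that keeps the combinatorial description.

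Finally, $\mathcal D$ is also a sphere cut decomposition of $H$. $H$ is a subgraph of $C$ with the inherited embedding, and every face of $C$ is contained in a face of $H$. Each noose therefore meets $H$ only in vertices of $C$ that lie in $H$, and meets each face of $H$ at most once: two visits to a face $f$ of $H$ would occur in faces of $C$ inside $f$, which are separated by edges of $C\setminus H$ that the noose never crosses. We expect the noose to meet each face of $H$ once, but this needs a small argument, or else a restriction to the standard fact that a noose of a supergraph restricts to a noose of a spanning-compatible subgraph after shortcutting. Complexity can only decrease. The recursive splitting condition is on vertex counts, so empty or one-vertex children may have to be pruned: leaves with fewer than 3 vertices of $H$ are simply cut, and any internal node with only one nontrivial child is contracted.
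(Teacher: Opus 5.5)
Your route is the same as the paper's: hull size from \Cref{lm:conv_O(n)}, treewidth from \Cref{lm:bound_on_treewidth}, a sphere cut decomposition of width $O(\mathrm{tw})$, then normalization and restriction to $H$. For the decomposition itself the paper simply cites Theorem~17 of Bl\"asius et al. That citation spares it your hedge about bridges and the conversion from Dorn et al.'s edge-indexed decompositions to the rooted region-splitting format. The paper then handles the last two steps with one-line assertions. Your attempts to justify those two steps do not work.

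\textbf{Restriction to $H$.} Your argument that a noose of $C=\conv(H)$ meets each face of $H$ at most once has a hole. A noose passes from one face of $C$ to the next through a vertex, never across an edge of $C\setminus H$, so it can leave a face of $H$ and re-enter it. In fact the claim is false. Let $H$ be the path $u_0u_1\cdots u_{p-1}$ along the boundary of a tile $\tau$. Then $C=\partial\tau$, and the noose of $C$ through $u_0,u_2$ crosses $\tau$ and the unbounded face of $C$, both of which lie in the single face of $H$. In this example no decomposition works for both graphs, so shortcutting cannot help either:
\begin{itemize}
\item a noose of $H$ passes through at most one vertex, since $H$ has a single face;
\item a noose of $C$ through at most one vertex has all of $C$ minus that vertex on one side;
\item hence the root region, which has $p\ge 3$ vertices, can never be split into two strictly smaller ones.
\end{itemize}
The ``moreover'' part must therefore be weakened to what the algorithm actually uses. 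Every noose of $\cD$ meets $H$ only in vertices of $H$. These vertices separate the part of $H$ inside from the part outside, and there are at most as many of them as the noose's complexity.

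\textbf{Normalization on the unbounded face.} Your per-face canonical replacement is sound inside tiles. Since $\nu_P,\nu_L,\nu_R$ each meet a tile at most once, a tile carries either one arc of the theta-configuration $\nu_P\cup\nu_L\cup\nu_R$ or a tripod around a branch point. Routing everything through the tile center preserves this.

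The unbounded face is different. When a branch point lies there (for $C=\partial\tau$, every split below the root needs one), the outer arcs of parent and children have endpoint pairs $\{a,b\}$, $\{a,x\}$, $\{x,b\}$. Each is normalized separately by joining two bisector rays. Sharing the ray at each common endpoint does not force the three joins to meet in one branch point. Suppose $r_a$ and $r_b$ meet at $y$, and $r_x$ first hits $\bar\nu_P$ at an interior point of the segment from $a$ to $y$. Then the normalized child arc from $x$ to $b$ continues along $r_x$ beyond that point, leaves $\reg(\bar\nu_P)$, and compatibility fails.

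A ``slight perturbation'' is not an admissible repair. The curves must be exactly normalized for \Cref{lm:1_of3} to enumerate them and for the compatibility test to apply. What is missing is an outer-face normalization rule that is provably consistent for every parent/children triple with a branch point in the unbounded face. The paper's proof also asserts this step without argument.
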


\begin{proof}
    Let $\mathrm{tw}(\conv(H))$ denote the treewidth of $\conv(H)$. 
    According to \Cref{lm:conv_O(n)} \\$|V(\conv(H))|=O(n)$. So, according to \Cref{lm:bound_on_treewidth} we have that
    $\mathrm{tw}(\conv(H))=O(1+\log \frac{n}{p+q})$. According to~\cite{RobertsonS91} and~\cite[Theorem 17 of full version]{Blasius24}, there exists a sphere cut decomposition of $\conv(H)$, where every noose of $\conv(H)$ intersects at most $O(\mathrm{tw}(\conv(H)))$ vertices of $\conv(H)$. 
    Finally, for any sphere cut decomposition, there is a normalized sphere cut decomposition that preserves the complexity of nooses, and any sphere cut decomposition of $\conv(H)$ is also a sphere cut decomposition of $H$.
\end{proof}

\section{Algorithm for \textsc{Tiling (Induced) Subgraph Recognition} in hyperbolic tilings}\label{sec:Algo}

Before we introduce our algorithmic techniques, we observe that the case where $p$ is large can be handled with an easier algorithm. This is helpful as otherwise the number of candidate nooses could become very large when $p>|V(H)|$. 

\begin{restatable}{lemma}{observ}\label{obs:p,q>}
    For a graph $H$ and $p>|V(H)|$, \textsc{Tiling (Induced) Subgraph Recognition} can be decided in $O(|V(H)|)$ time.
\end{restatable}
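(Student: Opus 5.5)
When $p>|V(H)|$, every cycle of $H$ has length at most $|V(H)|<p$. The plan is to show that in this regime the subgraphs of $\til$ on $n=|V(H)|$ vertices are exactly the forests of maximum degree at most $q$. For the induced version, the claim is that these are also exactly the induced subgraphs. The algorithm is then a linear-time check that $H$ is acyclic (for instance by a DFS, or by checking $|E(H)|=|V(H)|-c(H)$ where $c(H)$ is the number of components) together with a check that every degree is at most $q$.

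\textbf{Necessity.} A subgraph isomorphism $\phi$ maps every vertex of $H$ to a vertex of degree $q$, so degrees in $H$ are at most $q$. Next we need that every cycle of $\til$ has length at least $p$. A cycle $C$ in the planar graph $\til$ bounds a disk. By \Cref{lm:single_tile} applied to $\conv(C)$, or more directly by planarity, the disk is a union of at least one tile. The combinatorial curvature argument for $\frac1p+\frac1q<\frac12$ (the standard fact that the girth of $\til$ equals $p$) then gives $|C|\geq p$. An alternative route is \Cref{lm:path_around_tile}-type reasoning: a cycle shorter than $p$ enclosing tiles would contradict the negative curvature. If $H$ contained a cycle, its image would be a cycle of $\til$ of length at most $n<p$, which is impossible. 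Hence $H$ is a forest.

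\textbf{Sufficiency.} As noted in the introduction, we may treat each component separately, so it suffices to embed a tree $H$ of maximum degree at most $q$ as an induced subgraph of $\til$. I would embed it greedily along geodesic-like directions. Root $H$ and map the root to any vertex. Then process vertices in BFS order: when a vertex $v$ has been placed at $\phi(v)$, map its children to distinct neighbours of $\phi(v)$ that are not the parent's image, always choosing new neighbours that "point outward". The cleanest way to make this precise is to use the tree of tiles / the standard fact that $\til$ for hyperbolic $p,q$ contains an infinite $(q-1)$-ary outward branching structure, in which distinct branches never meet again and never become adjacent within distance less than $p/2$. Since any two image vertices are at distance less than $n<p$ along the tree, a shortest closed walk formed by the tree path together with an extra $\til$-edge would be a cycle of length at most $n<p$, contradicting girth $p$. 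This gives injectivity (no coincidences) and inducedness (no extra edges) at once. In fact the girth argument suffices on its own: any locally injective walk-map of the tree into $\til$ in which neighbours of each vertex are sent to distinct vertices is injective and induced, because a collision or an extra edge would create a closed non-backtracking walk of length at most $n<p$, which contains a cycle of length less than $p$. So the greedy placement needs only degree at most $q$.

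\textbf{Main obstacle.} The step I expect to be the main obstacle is establishing cleanly that the girth of $\til$ is exactly $p$ in the hyperbolic case, i.e., that no cycle shorter than $p$ exists. I would prove it via Euler's formula on the disk bounded by the cycle. If a cycle of length $\ell$ encloses $f\geq1$ tiles, the interior vertices have degree $q$ and the boundary vertices have degree at least $2$. Counting with $\frac1p+\frac1q<\frac12$ then forces $\ell\geq p$, with equality only for a single tile. With this in hand the algorithm runs in $O(|V(H)|+|E(H)|)=O(|V(H)|)$ time, where we first reject any graph with more than $q|V(H)|/2$ edges.
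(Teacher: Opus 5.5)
Your proposal is correct and follows the paper's argument: girth at least $p$ forces $H$ to be a forest of maximum degree at most $q$, and a greedy locally injective placement is automatically injective and induced because any collision or extra edge would close a cycle shorter than $p$. You additionally supply an Euler-formula proof of the girth bound, which the paper assumes without proof. One small fix is needed for the running time: reject as soon as $|E(H)|\ge |V(H)|$, since a forest has fewer edges than vertices. Rejecting only above $q|V(H)|/2$ edges gives $O(q|V(H)|)$ rather than $O(|V(H)|)$.
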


\begin{proof}
Every cycle in $\til$ has length at least $p$, hence, if $p>|V(H)|$, $H$ can be isomorphic to a subgraph of $\til$ only if it is a tree. 
So, we can embed $H$ greedily, as when we embed a vertex $v \in V(H)$, encountering a previously embedded vertex of $H$ that is not a neighbor of $v$ in $H$ would imply a cycle of length greater than $|V(H)|$ in $H$.
Thus, any tree of degree at most $q$ is an (induced) subgraph of $T_{p,q}$ and we can decide \textsc{Tiling (Induced) Subgraph Recognition} in $O(|V(H)|)$ time. 
\end{proof}

\subsection{Boundary assignment.}

We now introduce the key components of our algorithm. One of the difficulties in designing this algorithm is to assign some subgraph of $H$ to a given noose: a planar graph $H$ has $2^{|V(H)|}$ induced subgraphs. Notice that a vertex set $B$ separating $H$ is insufficient to define a subgraph, as we need to know what connected components of $H-B$ should fall in a given noose region.

A \EMPH{boundary assignment} for a graph $H$ and candidate noose $\nu$ is a bijection $\phi: B \rightarrow V(\nu)$, where $B \subset V(H)$. We call the vertices of $B$ \EMPH{boundary vertices}. A \EMPH{boundary neighborhood constraint} is a vertex set $\widetilde B\subset N_H(B)\setminus B$, and it expresses the neighbors of $B$ that will be mapped to the interior of the noose. Later, this set together with $B$ will be used for identifying the subgraph of $H$ we want to embed into $\reg(\nu)$.
We say that a triplet of a noose, a boundary assignment, and a boundary neighborhood constraint $(\nu,\phi,\widetilde B)$ is \EMPH{valid} if:
\begin{enumerate}
    \item $\phi:B\rightarrow V(\nu)$ is a subgraph isomorphism from $H[B]$ to $T|_\nu$, and
    \item there is a partition of $V(H-B)$ into $A,A'$ such that there is no edge in $H-B$ from $A$ to $A'$, and $\widetilde B=N(B)\cap A$.
\end{enumerate}

For a boundary assignment $\phi: B \rightarrow V(\nu)$ and isometry $\psi:\nu\rightarrow\nu^*$, 
let $\phi^*=\psi \circ \phi$ be their composition, i.e., $\phi^*: B \rightarrow V(\nu^*)$.
Based on a valid triplet we are also able to find the subgraph of $H$ that we intend to embed into $\reg(\nu)$ as follows.
Let $(\nu,\phi,\widetilde B)$ be a valid triplet, and consider the subgraph induced by $B$ and the connected components of $H-B$ that intersect $\widetilde B$. We denote the resulting graph $H(\nu,\phi,\widetilde B)$. 

For a graph $H$, we want to enumerate a collection of valid triplets such that some sphere cut decomposition of $H$ is in that collection. We prove this for $p\le|V(H)|$, as otherwise we can use the algorithm in Lemma \ref{obs:p,q>}.

\begin{restatable}{lemma}{collection}\label{lm:candidate}
    Let $H$ be a given graph of degree at most $q$ with $|V(H)|\le p$ vertices. Then there is a collection $\cC$ of $2^{O(q+q\log \frac{|V(H)|}{p+q})}\cdot |V(H)|^{O(1+\log \frac{|V(H)|}{p+q})}$ valid triplets $(\nu,\phi,\widetilde B)$, where $\nu$ is a candidate noose of complexity at most $O(1+\log \frac{|V(H)|}{p+q})$, $\phi$ is a boundary assignment for $H$ and $\nu$, and $\widetilde B$ is a boundary neighborhood constraint with the following property: If $H$ is a subgraph of $\til$, then all normalized nooses and related boundary assignments corresponding to some sphere cut decomposition of $H$ are in $\cC$. Moreover, the collection  $\cC$ of triplets can be enumerated in $2^{O(q+q\log \frac{|V(H)|}{p+q})}\cdot |V(H)|^{O(1+\log \frac{|V(H)|}{p+q})}$ time. 
\end{restatable}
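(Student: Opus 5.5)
Let $n=|V(H)|$ and $k=c(1+\log\frac{n}{p+q})$, where $c$ is the constant from \Cref{thm:branch_width}. The collection $\cC$ will simply be every triplet of bounded complexity that passes a validity check; we then bound its size. First, using \Cref{lm:1_of3}, we enumerate all candidate nooses $\nu$ of complexity at most $k$. There are $O((pq)^k)$ of them. Since $p\le n$ (we may assume this, otherwise \Cref{obs:p,q>} applies), this count is $n^{O(k)}\cdot q^{O(k)}$.

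For each $\nu$ we enumerate all boundary assignments. A boundary assignment is an injective map from some $B\subset V(H)$ with $|B|=|\nu|\le k$ onto the ordered vertex list $V(\nu)$, so it is fixed by choosing which vertex of $H$ goes to each noose vertex. That gives at most $n^{k}$ choices. For each choice we check in polynomial time that $\phi$ is a subgraph isomorphism from $H[B]$ to $T|_\nu$. Computing $T|_\nu$ only requires the $O(k)$ tiles incident to $\nu^-$, since every edge among the vertices of $\nu$ that lies in $\reg(\nu)$ lies in one of these tiles.

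For each valid $(\nu,\phi)$ we enumerate the boundary neighborhood constraints. Every $\widetilde B$ is a subset of $N_H(B)\setminus B$, which has at most $qk$ vertices because $H$ has maximum degree $q$. So there are at most $2^{qk}=2^{O(q+q\log\frac{n}{p+q})}$ candidates. For each one we check condition~2 of validity in polynomial time. Compute the components of $H-B$, let $A$ be the union of the components meeting $\widetilde B$, and accept exactly when no component of $A$ contains a vertex of $(N(B)\setminus B)\setminus\widetilde B$, so that $N(B)\cap A=\widetilde B$. Multiplying the three counts and the polynomial checking time gives the claimed size and running time, $2^{O(q+q\log\frac{n}{p+q})}\cdot n^{O(1+\log\frac{n}{p+q})}$.

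The step needing care is completeness: if $H$ embeds into $\til$ via some $\phi_0$, then every noose, together with its induced boundary assignment and neighborhood constraint, from some normalized sphere cut decomposition of $H$ must appear in $\cC$.
\begin{itemize}
\item By \Cref{thm:branch_width}, there is a normalized sphere cut decomposition $\cD$ of $\conv(\phi_0(H))$ with all nooses of complexity at most $k$, and $\cD$ is also a sphere cut decomposition of $\phi_0(H)$.
\item Each noose $\nu'$ of $\cD$ is the image under an orientation-preserving isometry $\psi$ of some candidate noose $\nu$, so $\nu$ was enumerated.
\item The noose $\nu'$ meets only vertices of the drawing. Its vertices in $V(\conv)$ that are images of vertices of $H$ form $B$, and we set $\phi=\psi^{-1}\circ\phi_0|_B$.
\item Edges of $H[B]$ map to edges of $\til$ lying in $\reg(\nu')$, because the noose does not cross edges of the drawing. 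Hence $\phi$ passes the isomorphism check.
\item Set $A=\{v\notin B:\phi_0(v)\in\mathrm{Int}(\nu')\}$ and $A'$ its complement in $V(H-B)$. No edge of $H-B$ joins $A$ to $A'$, because any such edge would cross $\nu'$ away from a vertex. With $\widetilde B=N(B)\cap A$ the triplet is therefore valid and is included.
\end{itemize}

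The subtle point is that the noose may pass through vertices of $\conv$ that are not images of $H$. Then the boundary assignment is not a bijection onto all of $V(\nu)$. I would handle this by also enumerating, for each candidate noose, which of its at most $k$ vertices are occupied by $H$. This costs a further factor $2^{k}$, which is absorbed in the bound, and $\phi$ is defined as a bijection onto the occupied vertices.
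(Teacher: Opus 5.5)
Your enumeration and count are the paper's proof. Like the paper, you use \Cref{lm:1_of3} for the $O((pq)^k)$ candidate nooses, at most $|V(H)|^k$ boundary assignments, $2^{kq}$ choices of $\widetilde B$, the bound $p\le|V(H)|$ to absorb $p^{O(k)}$, and \Cref{thm:branch_width} for completeness. You are right to read the hypothesis as $p\le|V(H)|$; the inequality in the statement is reversed, as the paper's proof and the sentence before the lemma show. The paper stops at this count. It never filters for validity. It also ignores that a normalized noose of $\conv(H)$ can pass through vertices not occupied by $H$, even though boundary assignments are bijections onto $V(\nu)$. Your $2^k$ occupancy patch and your test for condition~2 are sound additions.

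One step of your completeness argument fails. The fact that $\nu'$ does not cross edges of the drawing only shows that each edge of $H[B]$ lies entirely in $\mathrm{Int}(\nu')$ or entirely in $\mathrm{Ext}(\nu')$. Nothing forces it into $\reg(\nu')$, and this does occur:
\begin{itemize}
\item The two children of the root satisfy $\reg(\nu_L)\cup\reg(\nu_R)=\mathbb{H}^2$ with disjoint interiors. So they are one curve with opposite orientations, and $B_L=B_R$.
\item Any edge of $H$ between two of their common vertices therefore lies in the exterior of one child. For example, when $p=3$, two vertices visited consecutively across a tile are always adjacent in $\til$, so this happens as soon as $H$ uses that edge.
\item $T|_\nu$ explicitly omits edges outside $\reg(\nu)$, so that child violates condition~1 of validity. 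Your filter then discards a triplet the lemma claims is in $\cC$.
\end{itemize}
So completeness is not established as written. You need either to weaken condition~1 or to not filter on it. A weakening would require only that $\phi$ maps edges of $H[B]$ to edges of $\til$, or to $T|_\nu$ together with the edges among the noose vertices. The paper's proof does not resolve this either, since it only counts triplets and never checks that the decomposition's triplets are valid.
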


\begin{proof}
    According to \Cref{lm:1_of3} there are at most $O((pq)^{k})$ possible candidate nooses of complexity $k$ or less with fixed first vertex and first neighboring edge in $\til$. 
    By computing each generalized curve of complexity $k$, we will get a superset of all candidate nooses of complexity $k$.
    Each such noose has at most $k$ vertices of $\til$ and each vertex of $\til$ has $q$ neighbors in $\til$. So, a boundary neighborhood constraint associated to such a noose has size at most $kq$.
    Any boundary assignment from a subset of $H$ to $\nu$ will assign one of $|V(H)|$ vertices to $k$ different positions, so there are at most $|V(H)|^{k}$ possible boundary assignments for a noose of complexity $k$. 
    Each vertex in a boundary neighborhood constraint $\widetilde B$ has to be assigned either to the left or to the right of $\nu$, so there are $2^{kq}$ ways to assign vertices of $\widetilde B$ on a noose. According to \Cref{thm:branch_width}, there is a normalized sphere cut decomposition of $\conv(H)$, which is also a sphere cut decomposition of $H$ where each noose has complexity $O(1+\log \frac{|V(H)|}{p+q})$. 
    Hence, when $p\le |V(H)|$, the total number of triplets $(\nu,\phi,\widetilde B)$ of complexity
    $k=O(1+\log \frac{|V(H)|}{p+q})$ is 
    \[(pq)^{O(1+\log \frac{|V(H)|}{p+q})}|V(H)|^{O(1+\log \frac{|V(H)|}{p+q})}2^{q\cdot O(1+\log \frac{|V(H)|}{p+q})}=|V(H)|^{O(1+\log \frac{|V(H)|}{p+q})}\cdot 2^{O(q+q\log \frac{|V(H)|}{p+q})}\] 
\end{proof}

\subsection{Dynamic programming algorithm.}

\begin{definition}[Potential children]\label{def:potential children}
    The triplets $(\nu_L,\phi_L,\widetilde B_L)$ and $(\nu_R,\phi_R,\widetilde B_R)$ are \EMPH{potential children}  of a triplet $(\nu_P,\phi_P,\widetilde B_P)$ if
    \begin{enumerate}
        \item there exist orientation-preserving isometries $\psi_L$, $\psi_R$ such that
        \begin{itemize}
            \item the normalized curves $\nu_L$ and $\nu_R$ are compatible with regards to $\nu_P$ under the isometries $\psi_L$ and $\psi_R$, and
            \item the extension of $\phi^*_L:= \psi_L \circ \phi_L$, $\phi^*_R:=\psi_R\circ \phi_R$ and $\phi_P$ to the union of their domain is well-defined, i.e. they are equal on the shared part of their domain.
        \end{itemize}
        \item $H(\nu_P,\phi_P,\widetilde B_P)=H(\nu_L,\phi_L,\widetilde B_L)\cup H(\nu_R,\phi_R,\widetilde B_R)$, 
        \item $V(H(\nu_L,\phi_L,\tB_L))\cap V(H(\nu_R,\phi_R,\tB_R))=B_L\cap B_R$,
        \item We have $|B_L\cap B_R\cap B_P|\le 2$ and every vertex in $B_L\cup B_R\cup B_P$ appears at least in two of the sets $B_L,B_R$ and $B_P$. 
    \end{enumerate}
    
\end{definition}

Let $\overline \phi:H(\nu,\phi,\widetilde B)\rightarrow T|_\nu\subseteq\til$ be a subgraph isomorphism such that $\overline \phi|_B=\phi$ and $(\overline \phi)^{-1}(V(\nu))=B$, that is, the isomorphism $\overline\phi$ assigns only vertices of $B$ to the noose, and this assignment is the same as $\phi$.
Let $H^T(\nu,\phi,\widetilde B)$ denote the resulting embedded graph (the image of $\overline\phi$).
For the rest of this work, let $H_L$ and $H_R$ be the graphs $H(\nu_L,\phi_L,\tB_L)$ and $H(\nu_R,\phi_R,\tB_R)$, respectively, and
let $H^T_L$ and $H^T_R$ denote $H^T(\nu_L,\phi_L,\widetilde B_L)$ and $H^T(\nu_R,\phi_R,\widetilde B_R)$, respectively. 

\subparagraph*{The dynamic programming table.}

The dynamic programming table entry $A(\nu,\phi,\widetilde B)$ is \textsc{true} if and only if there exists a subgraph isomorphism $\overline \phi$ from $H(\nu,\phi,\widetilde B)$ to $T|_{\nu}\subseteq\til$ such that $\overline \phi|_B=\phi$ and $(\overline \phi)^{-1}(V(\nu))=B$, that is, the isomorphism $\overline\phi$ assigns only vertices of $B$ to the noose, and this assignment is the same as $\phi$. The entry is \textsc{false} otherwise. 

For a triplet $(\nu,\phi,\tB)$ where $|V(H(\nu,\phi,\widetilde B))|\le 2$, the triplet corresponds to a leaf noose in a sphere cut decomposition. We will discuss this after \Cref{lm:equivalence_A()_children}. 

\begin{lemma}[Recursion]\label{lm:equivalence_A()_children}
    The table entry $A(\nu_P,\phi_P,\widetilde B_P)$, where $|V(H(\nu_P,\phi_P,\wt B_P))|>2$ is \textsc{true}
    if and only if there exists a pair of triplets $(\nu_L,\phi_L,\widetilde B_L)$ and $(\nu_R,\phi_R,\widetilde B_R)$ that are potential children of $(\nu_P,\phi_P,\widetilde B_P)$ such that $A(\nu_L,\phi_L,\widetilde B_L)=A(\nu_R,\phi_R,\widetilde B_R)=$ \textsc{true}.
\end{lemma}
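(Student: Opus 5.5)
The plan is to prove the two directions separately. For soundness (``if'') we glue the two child embeddings along the shared curve. For completeness (``only if'') we cut a parent embedding with the child nooses of a sphere cut decomposition.

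\textbf{Soundness.} Assume $(\nu_L,\phi_L,\tB_L)$ and $(\nu_R,\phi_R,\tB_R)$ are potential children of $(\nu_P,\phi_P,\tB_P)$ and both table entries are \textsc{true}. Let $\bp_L,\bp_R$ be the witnessing isomorphisms. Using the isometries $\psi_L,\psi_R$ from compatibility, the maps $\psi_L\circ\bp_L$ and $\psi_R\circ\bp_R$ embed $H_L$ into $T|_{\nu^*_L}$ and $H_R$ into $T|_{\nu^*_R}$. They agree on $B_L\cap B_R$ because the extension of $\phi^*_L,\phi^*_R,\phi_P$ is well defined. By condition~3 of \Cref{def:potential children} the vertex sets of $H_L$ and $H_R$ meet only in $B_L\cap B_R$, so the union $\bp_P$ is a well-defined map on $V(H_L)\cup V(H_R)$. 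By condition~2 this set is $V(H(\nu_P,\phi_P,\tB_P))$, and every edge of $H_P$ is an edge of $H_L$ or of $H_R$, so $\bp_P$ preserves edges. Its image lies in $\reg(\nu^*_L)\cup\reg(\nu^*_R)=\reg(\nu_P)$.

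Injectivity needs care. The two regions are interior disjoint, and the only vertices of $H_L$ (resp.\ $H_R$) mapped onto $\nu^*_L$ (resp.\ $\nu^*_R$) are those of $B_L$ (resp.\ $B_R$). So a collision can only occur between some $b\in B_L$ and $b'\in B_R$ whose images coincide at a point of the shared curve $\nu^*_L\cap\nu^*_R$. We must show $b=b'$. By condition~4, $b$ lies in two of $B_L,B_R,B_P$. If $b\in B_R$, bijectivity of $\phi^*_R$ onto $V(\nu^*_R)$ forces $b=b'$. If instead $b\in B_P\setminus B_R$, the point lies on $\nu_P\cap\nu^*_L\cap\nu^*_R=\{v_1,v_2\}$. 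We then argue the same way for $b'$, and well-definedness of the extension on $B_P$ (where $\phi_P$ is a bijection) yields $b=b'$.

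Next we check $(\bp_P)^{-1}(V(\nu_P))=B_P$ and $\bp_P|_{B_P}=\phi_P$. By compatibility, $\nu_P\setminus\{v_1,v_2\}$ is the symmetric difference of $\nu^*_L$ and $\nu^*_R$. Hence a vertex mapped onto $\nu_P$ is a boundary vertex of exactly one child (or maps to some $v_i$). It lies in at least two of the three sets, so it belongs to $B_P$. Conversely, each $b\in B_P$ lies in $B_L$ or $B_R$, and consistency gives $\bp_P(b)=\phi_P(b)$.

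For the induced variant we also exclude tiling edges between images of non-adjacent vertices. An edge of $\til$ between a point of $\mathrm{Int}(\nu^*_L)$ and a point of $\mathrm{Int}(\nu^*_R)$ would have to cross the shared curve, which meets the tiling only in vertices and tile centers, so no such edge exists. Edges among boundary vertices are handled by condition~1 of validity (or the analogous induced condition).

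\textbf{Completeness.} Suppose $\bp_P$ witnesses $A(\nu_P,\phi_P,\tB_P)=\textsc{true}$, and let $H^T_P$ be its image. Take a normalized sphere cut decomposition of $\conv(H^T_P)$ (or of that hull together with $V(\nu_P)$) in which $\nu_P$ appears as a node, as provided by \Cref{thm:branch_width} and its normalization. Let $\nu^*_L,\nu^*_R$ be the children of $\nu_P$, and pull them back by isometries to candidate nooses $\nu_L,\nu_R$; compatibility with regards to $\nu_P$ then holds by the definition of a sphere cut decomposition. Set $B_L:=\bp_P^{-1}(V(\nu^*_L))$ and $\phi_L:=\psi_L^{-1}\circ\bp_P|_{B_L}$. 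Set $\tB_L:=$ the neighbors of $B_L$ mapped into $\mathrm{Int}(\nu^*_L)$, and define the right side symmetrically. Validity follows by taking $A$ to be the set of vertices mapped into $\mathrm{Int}(\nu^*_L)$, which is separated from the rest by the Jordan curve theorem.

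The key check, which I expect to be the main obstacle, is that $H(\nu_L,\phi_L,\tB_L)$ is exactly the part of $H_P$ embedded in $\reg(\nu^*_L)$. One direction holds because every component of $H-B_L$ touching $\tB_L$ is connected and avoids $\nu^*_L$, so it stays inside. The other direction is that every vertex inside has a path to $B_L$ through such a component, which uses connectivity of $H_P$ relative to $B_P$. Given this, conditions~2 and~3 follow, and condition~4 follows from the three-curve structure at $v_1,v_2$. The restrictions of $\bp_P$ then witness both child entries being \textsc{true}. The degenerate situations also need attention: a child whose subgraph is empty, or vertices of $B_P$ lying only on one child's boundary. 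These have to be reconciled with the conventions for leaf triplets.
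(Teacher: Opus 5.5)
Your proposal follows the paper's proof. For ``if'' you glue $\bp^*_L$ and $\bp^*_R$ into $\bp_P$ using condition~3 and the consistency of the boundary assignments, then check injectivity, edge preservation, $\bp_P|_{B_P}=\phi_P$ and $(\bp_P)^{-1}(V(\nu_P))=B_P$. For ``only if'' you restrict $\bp_P$ to the two children of $\nu_P$ in a normalized sphere cut decomposition obtained from \Cref{thm:branch_width}. Your treatment of collisions at $v_1,v_2$, via condition~4 and the bijectivity of $\phi_P$, is more explicit than the paper's. The step you leave unjustified, namely that $\nu_P$ actually occurs as a node of such a decomposition (which \Cref{thm:branch_width} does not by itself guarantee), is taken for granted in exactly the same way in the paper.
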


\begin{proof}
    The restriction $|V(H(\nu_P,\phi_P,\wt B))|>2$ guarantees that $\nu_P$ must have potential children. In this proof, we call $H_L$, $H_R$ and $H_P$ the graphs $H(\nu_L,\phi_L,\tB_L)$, $H(\nu_R,\phi_R,\tB_R)$ and $H(\nu_P,\phi_P,\tB_R)$, respectively.

    \subparagraph*{``$\Rightarrow$''} This is the straightforward direction. Let us assume that $A(\nu_P,\phi_P,\widetilde B_P)=$ \textsc{true}, so there exists an embedding of $H_P$ to $\til$. According to \Cref{thm:branch_width} there is a normalized sphere cut decomposition of ${H_P}$, where $\nu_P$ has two compatible child nooses $\nu_L$ and $\nu_R$.
    The compatibility of $\nu_L$ and $\nu_R$ with regards to $\nu_P$ follows directly from the noose hierarchy. All vertices of $H(\nu_L,\phi_L,\tB_L)$ and $H(\nu_R,\phi_R,\tB_R)$ can be assigned to $\til$ the same way as in $H^T(\nu_P,\phi_P,\widetilde B_P)$. So, $A(\nu_L,\phi_L,\widetilde B_L)=A(\nu_R,\phi_R,\widetilde B_R)=$ \textsc{true}.
    
    \subparagraph*{``$\Leftarrow$''} 
    Let us assume that $(\nu_L,\phi_L,\widetilde B_L)$ and $(\nu_R,\phi_R,\widetilde B_R)$ are potential children of $(\nu_P,\phi_P,\widetilde B_P)$, and that $A(\nu_L,\phi_L,\widetilde B_L)=A(\nu_R,\phi_R,\widetilde B_R)=$ \textsc{true}. 
    So, there exists two isomorphisms $\overline\phi_L\colon H_L\rightarrow H_L^T\subseteq T|_{\nu_L}$ and $\overline\phi_R\colon H_R\rightarrow H_R^T\subseteq T|_{\nu_R}$. 
    Let $H^T_L,H^T_R\subseteq\til$ be the images of $\bp_L$ and $\bp_R$, respectively, that is, $\bp(H_L)=H^T_L$ and $\bp(H_R)=H^T_R$. We denote $\bp^*_L$ and $\bp^*_R$ the boundary assignments as in \Cref{def:potential children}.
    Now, we define \[
        \overline\phi_P(v)=\left\{
        \begin{aligned}
        \overline\phi^*_L(v), & \text{ if } v\in V(H_L) \\
        \overline\phi^*_R(v), & \text{ if } v\in V(H_R) 
        \end{aligned}
        \right.
    \]

    We claim that the function $\overline\phi_P$ is well-defined, that is, $\overline\phi^*_L(v)=\overline\phi^*_R(v)$ for all $v\in \dom(\overline\phi^*_L)\cap\dom(\overline\phi^*_R)$. To see this, let $v\in \dom(\overline\phi^*_L)\cap\dom(\overline\phi^*_R)= V(H_L)\cap V(H_R)=B_L\cap B_R$, where the last equality follows from the fact that $(\nu_L,\phi_L,\widetilde B_L)$ and $(\nu_R,\phi_R,\widetilde B_R)$ are potential children of $(\nu_P,\phi_P,\widetilde B_P)$. By the definition of the dynamic programming table we have $\overline\phi^*_L(v)=\phi_L(v)$ and $\overline\phi^*_R(v)=\phi_R(v)$.
    So, as $\phi^*_L$ and $\phi^*_R$ are well-defined on their shared part of their domain, it follows that $\overline\phi^*_L(v)=\phi_L^*(v)=\phi^*_R(v)=\overline\phi^*_R(v)$, as required.

    Next, we will prove in the following two claims that $\overline\phi_P$ is an isomorphism.
    
    \begin{claim*}\label{lm:phi_P_bijection}
        The function $\overline \phi_P$ is injective.
    \end{claim*}
    \begin{claimproof}
        Let $u\not=v$ be two vertices in $V(H_L)\cup V(H_R)$. We assume for the sake of contradiction that $\bp_P(u)=\bp_P(v)$. If $u,v\in V(H_L)$, then $\bp_P(u)=\bp^*_L(u)$ and $\bp_P(v)=\bp^*_L(v)$, however $\bp^*_L$ is an isomorphism, so $\ba\phi_P(u)\not=\ba\phi_P(v)$. A similar argument is valid if $u,v\in V(H_R)$.

        If $u\in V(H_L)\setminus V(H_R)$ and $v\in V(H_R)\setminus V(H_L)$, then according to Condition 3 of \Cref{def:potential children}, $u,v\not\in B_L\cap B_R$. 
        So, $\bp_P(u)=\bp^*_L(u)$ and $\ba\phi_P(v)=\ba\phi^*_R(v)$. The function $\ba\phi^*_L$ assigns vertices of $H_L\setminus B_R$ to $ T|_{\nu^*_L}\setminus\nu^*_R$, so $\bp^*_L(u)\in T|_{\nu^*_L}\setminus\nu^*_R$. The curves $\nu_L$ and $\nu_R$ are compatible with regards to $\nu_P$, so $ T|_{\nu^*_L}\setminus\nu^*_R= T|_{\nu^*_L}\setminus  T|_{\nu^*_R}$. Thus, $\bp_P(u)\in  T|_{\nu^*_L}\setminus  T|_{\nu^*_R}$.
        Similarly, $\bp_P(v)\in  T|_{\nu^*_R}\setminus  T|_{\nu^*_L}$, so $\bp(u)\not=\bp(v)$, which forms a contradiction. 
    \end{claimproof}

    \begin{claim*}
    \label{lm:edge_also_in_phi_P}
        The function $\overline \phi_P$ is an isomorphism, that is, $uv\in E(H_P)$ if and only if $\bp_P(u)\bp_P(v)\in E(H_P^T)$.
    \end{claim*}
    \begin{claimproof}
        If $u,v\in V(H_L)$ or $u,v\in V(H_R)$, the claim holds, as $\bp^*_L$ and $\bp^*_R$ are isomorphisms.

        Now, let $u\in V(H_L)\setminus V(H_R)$ and $v\in V(H_R)\setminus V(H_L)$. Observe that $B_L\cap B_R$ separates $V(H_L)$ from $V(H_R)$ inside $V(H_P)$ thus $uv\not\in E(H_P)$. Now, we will prove that $\bp_P(u)\bp_P(v)\not\in E(H_P^T)$.
        As in the proof of the previous claim, $\bp_P(u)=\bp_L^*(u)\in T|_{\nu^*_L}\setminus T|_{\nu^*_R}$ and $\bp_P(v)=\bp_R^*(v)\in T|_{\nu^*_R}\setminus T|_{\nu^*_L}$. Notice that $\nu^*_L\cap \nu^*_R$ separates $\reg(\nu_P)$ and contains neither $u$ nor $v$, thus $uv\not\in E(T|_{\nu})$. 
    \end{claimproof}

    Next, we will prove that $\bp_P|_{B_P}=\phi_P$ and $(\bp_P)^{-1}(V_T(\nu_P))=B_P$.
    First, we prove that $\bp_P|_{B_P}=\phi_P$.
    Let $v\in B_P$. If $v\in B_L\subseteq V(H_L)$, then $\bp_P|_{B_P}(v)=\bp^*_L|_{B_P\cap B_L}(v)=\phi^*_L|_{B_P}(v)=\phi_P|_{B_L}(v)$. Otherwise, $v\not\in B_L$, so by $B_P\subset B_L\cup B_R$, we have $v\in B_R$, and the symmetric argument applies.  So, $\bp_P|_{B_P}=\phi_P$.
       
    Now, we prove that $(\bp_P)^{-1}(V_T(\nu_P))=B_P$.
    Let $w\in V_T(\nu_P)$. If $w\in\nu_P\cap\nu^*_L$, then
    \[w\in(\overline\phi_P)^{-1}(V_T(\nu_P\cap\nu^*_L))=(\bp_L^*)^{-1}(V_T(\nu_P\cap\nu^*_L))=(\phi_L^*)^{-1}(V_T(\nu_P))=B_P\cap B_L.\] 
    If $w\in V_T(\nu_P\setminus\nu^*_L)$, then $w\in\nu_P\cap \nu^*_R$ as $\nu_P\subseteq \nu^*_L \cup \nu^*_R$. Thus, by symmetry $w\in B_P\setminus B_L$, as required.    
 Finally, there exists a subgraph isomorphism $\overline \phi_P$ from $H(\nu_P,\phi_P,\widetilde B_P)$ to $\reg(\nu_P)\subseteq\til$ such that $\overline \phi_P|_B=\phi_P$ and $(\overline \phi_P)^{-1}(V(\nu_P))=B_P$. 
    Thus, $A(\nu_P,\phi_P,\widetilde B_P)=$ \textsc{true}.
\end{proof}

\subparagraph*{Leaf nooses.} 
As described in the definition of the dynamic programming table, each triplet $(\nu,\phi,\tB)$ where $|H(\nu,\phi,\tB)|\le 2$ corresponds to a leaf noose in a sphere cut decomposition. In this case, it can be decided in constant time whether there exists a subgraph isomorphism $\overline \phi$ from $H(\nu,\phi,\widetilde B)$ to $T|_{\nu}\subseteq\til$ such that $\overline \phi|_B=\phi$ and $(\overline \phi)^{-1}(V(\nu))=B$.

\Maintheorem*

\begin{proof}
    Let $H$ be an $n$-vertex graph. If $H$ has a vertex of degree at least $q+1$, then we reject; this can be done in $O(n^2)$ time.
    According to \Cref{lm:1_of3} and \Cref{lm:candidate} there are $(pq)^{O(1+\log{\frac{n}{p+q}})}$ valid triplets $(\nu,\phi,\tB)$ of size $O(1+\log{\frac{n}{p+q}})$ and we can enumerate all of them in $2^{O(q+q\log \frac{n}{p+q})}\cdot n^{O(1+\log \frac{n}{p+q})}$ time; let $\cC$ denote the set of enumerated triplets. 
    Our next goal is to sort all valid triplets in $\cC$ in increasing order of $|V(H(\nu,\phi,\tB))|$. 
    To determine the sizes of the graphs $H(\nu,\phi,\tB)$, we must first construct them, which can be done as follows. 
    We introduce an auxiliary source vertex $s$ (not in $H$) and connect $s$ to every vertex of $B\subseteq H$, that is, every vertex that $\phi$ assigns to $\nu$.
    Then, we perform a breadth-first search in $H\cup s$ starting from $s$, such that for vertices of $B$ we consider only neighbors in $B\cup\widetilde B$. As only vertices of $B$ can be neighbors of vertices in $H\setminus H(\nu,\phi,\tB)$, 
    this construction results in $H(\nu,\phi,\tB)$.
    The breadth first search takes $O(|V(H(\nu,\phi,\tB))|)$ time, which is at most $O(n)$. Finally, we sort all valid triplets according to their size $|V(H(\nu,\phi,\tB))|$ in time $O(|\cC|\cdot\log|\cC|)$.  
    
    The dynamic programming algorithm will process each triplet in $\mathcal{C}$ in their given order, thus, for any triplet $(\nu,\phi,\tB)$, all its potential children are processed before it. According to \Cref{lm:equivalence_A()_children}, the dynamic programming table entry of a triplet $(\nu,\phi,\tB)$, where $|H(\nu,\phi,\tB)|> 2$ can be decided by finding every pair of potential children. Since the potential children of a triplet are processed before the triplet itself, it is sufficient to process every triplet only once.

    \subparagraph*{Computing potential children.} In order to run a dynamic programming algorithm based on the recursion given in \Cref{lm:equivalence_A()_children} we need to decide algorithmically whether the triplets $(\nu_L,\phi_L,\tB_L)$ and $(\nu_R,\phi_R,\tB_R)$ are potential children of a valid triplet $(\nu_P,\phi_P,\tB_P)$.
    In Condition 2 of \Cref{def:potential children}, we compare with each other three graphs with at most $|V(H_P)|$ vertices and at most $|E(H_P)|$ edges. In conditions 3 and 4, we compare a constant number of sets of size at most $|V(H_P)|$ with each other.
    In our encoding of candidate nooses, we can fix the origin such that each vertex of a noose has distance at most $|V(H)|^{O(1)}$ from the origin.
    So, conditions 2--4 can be computed in polynomial time. 

    Next, we will show how to verify Condition 1 of \Cref{def:potential children} in polynomial time. We need to compute whether $\nu_L$ and $\nu_R$ are compatible with regards to $\nu_P$ under some isometries $\psi_L$ and $\psi_R$ such that the extension of $\psi_L\circ\phi_L,\psi_R\circ\phi_R$ and $\phi_P$ to the union of their domain is well-defined. 
    We will not compute $\psi_L$ and $\psi_R$ explicitly; instead, we will compute $\nu^*_L=\psi_L(\nu_L)$ and $\nu^*_R=\psi_R(\nu_R)$ directly.
    In any sphere cut decomposition of a graph $H$, the empty noose corresponds to the root and every other noose has complexity at least 1. If $\nu_P$ does not correspond to the empty noose, $\phi_P$ assigns at least one vertex $v$ of $H$ to $\nu_P$. Every vertex that is assigned to $\nu_P$ by $\phi_P$ is also assigned to at least one of $\nu_L$ and $\nu_R$ by $\phi_L$ or $\phi_R$, respectively. So we have $\nu_i$, where $i\in\{L,R\}$, that is assigned the vertex $v$ by $\phi_i$. This fixes a single way to fit $\nu_i$ and $\nu_P$ together, so there is only one way to define $\nu^*_i$, if $\nu^*_i$ exists, and we can compute that in polynomial time. Also, since $\nu_j$, where $j\in\{L,R\}$ and $j\not=i$ does not correspond to the empty noose, $\nu_j$ is assigned at least one vertex $w$ of $H$ by $\phi_j$. Again, at least one of $\nu_P$ and $\nu_i$ is assigned $w$ by $\phi_P$ or $\phi_i$, respectively. This results in a single way to fit $\nu_L,\nu_R$ and $\nu_P$ together. So, there is also only one way to define $\nu^*_j$, if $\nu^*_j$ exists.
    so, it can be decided in polynomial time whether $\nu^*_L$ and $\nu^*_R$ exist. Thus, it can also be decided whether
    $\nu_L$ and $\nu_R$ are compatible with regards to $\nu_P$ under $\psi_L$ and $\psi_R$, and whether the extension of $\psi_L\circ\phi_L,\psi_R\circ\phi_R$ and $\phi_P$ to the union of their domain is well-defined, as in all of those it is sufficient to compare 3 sets of at most $|V(H_P)|$ vertices with each other.

    If $\nu_P$ is the empty noose, that is, $\phi_P$ assigns no vertices to $\nu_P$, then if $\nu_L$ and $\nu_R$ are potential children of $\nu_P$, it must hold that $B_L=B_R$ and $|B_L|\ge1$, which can be decided in polynomial time. It must also hold that there is a common vertex assigned to $\nu_L$ and $\nu_R$ by $\phi_L$ and $\phi_R$, respectively. So, there is only one way to fit $\nu_L$ and $\nu_R$ together. Thus compatibility between $\nu_L$ and $\nu_R$, as well as the nooses $\nu^*_L$ and $\nu^*_R$, if they exist, can be computed in polynomial time. The nooses $\nu_L$ and $\nu_R$ are compatible with $\nu_P$ if $\reg(\nu^*_L)\cup\reg(\nu^*_R)$ is the entire plane. This can be decided in polynomial time by following the boundaries of $\nu^*_L$ and $\nu^*_R$.
    Finally, according to \Cref{lm:equivalence_A()_children}, if and only if the table entry $A(\nu_\emptyset,\phi_\emptyset,\tB_\emptyset)$, where $\nu_\emptyset$ is the empty noose, is \textsc{true}, then $H$ is isomorphic to a subgraph of $\til$.

    \subparagraph*{Induced variant of the problem.} If we are solving the induced variant of \textsc{Tiling Subgraph Recognition}, we need the additional restriction that we cannot embed two non-neighboring vertices of $H$ to neighboring vertices of $\til$ when computing leaf triplets and when computing whether two triplets are potential children of a third triplet.

    \subparagraph*{Total running time.} Recall that we enumerated and sorted all $(pq)^{O(1+\log{\frac{n}{p+q}})}$ valid triplets in time $2^{O(q+q\log \frac{n}{p+q})}\cdot n^{O(1+\log \frac{n}{p+q})}$. Then, we processed the $(pq)^{O(1+\log{\frac{n}{p+q}})}$ valid triplets in order. Triplets with a graph of complexity at most $2$ are processed in constant time, and other triplets are processed by matching pairs of nooses from a collection of at most $(pq)^{O(1+\log{\frac{n}{p+q}})}$ nooses, where each pair takes $n^{O(1)}$ time. Finally, we return the table entry of the valid triplet containing the empty noose. This takes a total time of 
    \[
    2^{O(q+q\log \frac{n}{p+q})}\cdot n^{O(1+\log \frac{n}{p+q})} 
    + (pq)^{O(1+\log{\frac{n}{p+q}})}\cdot n^{O(1)}
    = 2^{O(q+q\log \frac{n}{p+q})}\cdot n^{O(1+\log \frac{n}{p+q})}.\qedhere
    \]
\end{proof}

\section{Algorithm and lower bound for \textsc{Tiling (Induced) Subgraph Recognition} in Euclidean tilings}\label{sec:Euclidean}

In this section, $\til$ always represents a Euclidean $\{p,q\}$-tiling graph, that is, $\frac1p+\frac1q=\frac1 2$.

\subsection{\{p,q\}-\textsc{(Induced) Tiling Subgraph Recognition}.}

We use a simple divide \& conquer scheme based on the following separator theorem.

\begin{theorem}[Corollary of~\cite{CarmiCKKORRSS20}]\label{thm:separator44}
    Let $H$ be an $n$-vertex subgraph of the Euclidean grid $T_{4,4}$. Then there exists a line $\ell$ that is axis-aligned such that $S:=V(H)\cap \ell$ is vertex separator of size $|S|=O(\sqrt{n})$ such that both half-planes of $\ell$ contain at most $\frac45 n$ vertices of $H-S$.
\end{theorem}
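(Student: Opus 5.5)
We derive this from the separator theorem of Carmi et al.~\cite{CarmiCKKORRSS20} for point sets in the plane. That result gives, for any set $P$ of $n$ points, an axis-parallel line (or a suitable family of candidate lines) that splits $P$ in a balanced way and meets only $O(\sqrt{n})$ points. We apply it to the point set $P=V(H)\subset\mathbb{Z}^2$, viewing $H$ as embedded in the standard drawing of $T_{4,4}$.

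\textbf{Choosing the line.} Sort the $x$-coordinates and the $y$-coordinates of $P$. Consider the vertical grid lines $x=c$ for $c$ ranging over the integers strictly between the $n/5$-th and the $4n/5$-th smallest $x$-coordinate, and do the same for horizontal lines. Any such line leaves at most $4n/5$ points of $P$ in each open half-plane. We then use a counting argument.

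If the $x$-range of these middle points contains at least $\sqrt{n}$ integer columns, then the roughly $3n/5$ middle points are spread over at least $\sqrt{n}$ columns. By averaging, some column contains $O(\sqrt{n})$ points, and we choose $\ell$ to be that column.

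Otherwise both the middle $x$-range and the middle $y$-range contain fewer than $\sqrt{n}$ columns and rows respectively. This is the step where the exact statement of~\cite{CarmiCKKORRSS20} is needed: it guarantees that either the vertical or the horizontal family works. The elementary reason is this. At least $n/5$ points lie in the middle vertical strip, and at least $n/5$ lie in the middle horizontal strip. Because the vertices are distinct integer points, the middle vertical strip has width below $\sqrt{n}$, so the points in it that also lie in the middle horizontal strip number fewer than $\sqrt{n}\cdot\sqrt{n}=n$. I would simply cite the corollary form rather than redo this calculation. I expect balancing the constants ($4/5$) in this case to be the main obstacle.

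\textbf{Separation.} Since $\ell$ is a grid line and every edge of $T_{4,4}$ is a unit axis-parallel segment between integer points, an edge of $H$ joining the two open sides of $\ell$ would have to cross $\ell$ at a non-lattice point. That cannot happen, because any edge meeting $\ell$ either lies on $\ell$ or has an endpoint on $\ell$. Hence $S=V(H)\cap\ell$ separates the two sides, and $|S|=O(\sqrt{n})$ by the choice of $\ell$.
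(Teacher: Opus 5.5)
\textbf{Verdict: genuine gap.} The narrow-strip case, which is the real content of the statement, is not handled: your sketch of it fails, and the citation you fall back on is misstated.

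What works: the balanced grid lines are those between the $n/5$-th and $4n/5$-th order statistics, endpoints included. Averaging handles the case of many balanced columns. Your argument that an integer grid line is crossed by no edge of $T_{4,4}$ is correct.

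\textbf{Why the narrow case fails.} With threshold $\sqrt{n}$, this case is not contradictory. You get at least $n/5$ vertices in a box with fewer than $n$ lattice points, which is perfectly possible. The $\sqrt{n}\times\sqrt{n}$ grid graph lands there: it has only about $0.6\sqrt{n}$ balanced columns and rows. So in this case you still owe a light line, and your sketch does not produce one.

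\textbf{Why the citation does not close it.} Deferring to \cite{CarmiCKKORRSS20} is all the paper does. But that theorem is about $n$ unit disks with $m$ intersecting pairs: it gives an axis-parallel line meeting $O(\sqrt{m+n})$ disks with $4n/5$ balance. It is not the point-set statement you attribute to it. For lattice points, a point-set version is useless: one can usually find a balanced line at a half-integer coordinate meeting no vertex at all, yet edges of $H$ cross it. To use the citation, you would place disks of radius $1/2$ at $V(H)$. Then $m\le 2n$, and every vertical line meets all disks of the nearest column. You would then move the line onto that column and check that the balance survives.

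\textbf{The fix.} The missing idea is the right case threshold, and with it the citation becomes unnecessary. Let $k_x,k_y$ be the numbers of balanced columns and rows, taken over the closed ranges.
\begin{itemize}
\item If $k_x\ge\sqrt{n/5}$, some balanced column carries at most $n/k_x\le\sqrt{5n}$ vertices. The same holds for rows if $k_y\ge\sqrt{n/5}$.
\item Otherwise $k_xk_y<n/5$. Each closed middle strip contains more than $3n/5$ vertices, so more than $n/5$ distinct lattice points lie in their intersection. That intersection is a box with only $k_xk_y<n/5$ lattice points, a contradiction.
\end{itemize}
Hence some balanced grid line has $|S|\le\sqrt{5n}$. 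The constant that needed balancing was the case threshold, not the $4/5$.
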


\begin{restatable}{corollary}{corol}\label{cor:separator36}
    Let $H$ be an $n$-vertex subgraph of $T_{3,6}$ or $T_{6,3}$ where the edge directions are among the unit vectors $e_1=(1,0),e_2=(\frac12,\frac{\sqrt3}2),e_3=(\frac{-1}2,\frac{\sqrt3}2)$. Then there exists a line $\ell$ that is parallel to one of $e_1,e_2$ such that $S:=V(H)\cap \ell$ is separator of size $|S|=O(\sqrt{n})$ such that both half-planes of $\ell$ contain at most $\frac45 n$ vertices of $H-S$.
\end{restatable}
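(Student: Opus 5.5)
The plan is to reduce to \Cref{thm:separator44} by a linear change of coordinates that maps the triangular lattice onto the square lattice $\mathbb{Z}^2$. First I handle $T_{3,6}$. Its vertex set is the lattice $\Lambda=\{ae_1+be_2 : a,b\in\mathbb{Z}\}$, and $e_3=e_2-e_1$. Let $A$ be the linear map sending $e_1\mapsto(1,0)$ and $e_2\mapsto(0,1)$. Then $A(\Lambda)=\mathbb{Z}^2$. Edges in directions $e_1$ and $e_2$ become unit horizontal and vertical grid edges, and edges in direction $e_3$ become diagonal steps $(-1,1)$.

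Let $H'$ be the subgraph of $H$ obtained by deleting all $e_3$-edges. Then $A(H')$ is an $n$-vertex subgraph of $T_{4,4}$. By \Cref{thm:separator44} there is an axis-aligned line $\ell'$ such that $S=V(A(H'))\cap\ell'$ has size $O(\sqrt n)$ and each open half-plane contains at most $\frac45 n$ vertices. The line $\ell=A^{-1}(\ell')$ is parallel to $e_1$ or $e_2$, because $A^{-1}$ maps the axis directions to $e_1,e_2$. Since $A$ is a bijection on vertices, $V(H)\cap\ell=A^{-1}(S)$ has the same size and satisfies the same half-plane counts.

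It remains to check that $A^{-1}(S)$ is a separator of all of $H$, including the $e_3$-edges. Suppose $\ell'$ is the line $x=c$ with $c$ an integer; if $c$ is not an integer, $S$ is empty and no edge crosses. An edge $(-1,1)$ changes $x$ by exactly $1$. Such an edge therefore joins two vertices whose $x$-coordinates differ by $1$, so it cannot join $x<c$ to $x>c$ without one endpoint lying on $x=c$. The same argument works for horizontal lines $y=c$. Hence no edge of $H$ jumps over $\ell$ between the two open sides, and $V(H)\cap\ell$ separates.

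For $T_{6,3}$, the hexagonal vertex set is contained in the triangular lattice $\Lambda$ with the same edge directions $\pm e_1,\pm e_2,\pm e_3$. So $H$ is a subgraph of $T_{3,6}$ under this identification, and the previous case applies verbatim.

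The main obstacle is the $e_3$ edges, which become diagonals after the transformation and are not covered by \Cref{thm:separator44}. The point that resolves it is that each diagonal changes each coordinate by only $\pm1$, so it cannot skip over an integer axis-aligned line.
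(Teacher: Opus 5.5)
Your route is the same as the paper's: the linear map sending $e_1,e_2$ to the axis directions turns the triangular lattice into $\mathbb{Z}^2$ and the $e_3$-edges into $(-1,1)$-diagonals, which cannot jump over a grid line. Two points are clearer in your version than in the paper's. You delete the $e_3$-edges before invoking \Cref{thm:separator44}, so it is applied to an actual subgraph of $T_{4,4}$. You also reduce $T_{6,3}$ to $T_{3,6}$ explicitly.

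One step is false as written: the aside ``if $c$ is not an integer, $S$ is empty and no edge crosses''. \Cref{thm:separator44} applied to $A(H')$ only guarantees that no edge of $H'$ crosses $\ell'$. An $e_3$-edge from $(k+1,m)$ to $(k,m+1)$, however, crosses $x=c$ for every $c\in(k,k+1)$ without having an endpoint on it. Concretely, let $H$ consist of two vertical paths on the points $(0,j)$ and $(1,j)$, $0\le j<n/2$, together with the diagonals from $(1,j)$ to $(0,j+1)$. Then $x=\frac12$ satisfies every conclusion of \Cref{thm:separator44} for $A(H')$: it is two disjoint paths, $S=\emptyset$, and each side has $n/2$ vertices. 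Yet $\emptyset$ does not separate $H$. Moving to the neighbouring grid line $x=0$ or $x=1$ does not help either, since that puts $n/2$ vertices on the line.

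So the non-integer case cannot be dismissed: you need $\ell'$ to be a grid line, and the literal statement of \Cref{thm:separator44} does not promise this. The paper's proof tacitly uses the theorem in exactly this form (it speaks of ``vertical and horizontal grid lines''). This is the form one extracts from the Carmi et al.\ disk separator, by placing disks of suitable radius at the vertices and moving the line to a suitable adjacent grid line. State that you use this version; with it, the rest of your argument goes through unchanged.
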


\begin{proof}
    There exists a linear transformation that maps $e_1$ and $e_2$ to the orthogonal basis $(0,1)$ and $(1,0)$. That transformation maps vertices and edges parallel to $e_1$ or $e_2$ of $T_{3,6}$ and $T_{6,3}$ to vertices and edges of $T_{4,4}$, respectively. Edges of $T_{3,6}$ and $T_{6,3}$ parallel to $e_3$ correspond to diagonals of the tiles of $T_{4,4}$ parallel to $(-1,1)$. So, vertices are mapped to vertices of $T_{4,4}$, moreover, vertical and horizontal grid lines only intersect vertices and no edges; thus forming a vertex separator on the vertices of $T_{3,6}$ and $T_{6,3}$. 
\end{proof}

For this section, if $\til$ is the tiling graph $T_{3,6}$ or $T_{6,3}$, we assume the geometric transformation from the proof of \Cref{cor:separator36}, so that vertices are a subset of the integer grid~$\mathbb{Z}^2$.

If an $n$-vertex graph $H$ is isomorphic to a subgraph of $\til$, it is also isomorphic to a subgraph of $\til$ bounded by a square boundary $\rho$ where each side is axis-aligned and has length $2n$.
This follows from the observation that any two vertices in $H$ have distance at most $n-1$. 
We denote $\reg(\rho)$ the closure of the interior of $\rho$.
Conversely, if $H$ is isomorphic to a subgraph of $\til\cap \reg(\rho)$, then $H$ is isomorphic to a subgraph of $\til$. 
 Thus, it is sufficient to solve \textsc{Tiling (Induced) Subgraph Recognition} from $H$ to the bounded region $\reg(\rho)$.
We will solve this problem using a technique similar to the solution for \textsc{Tiling (Induced) Subgraph Recognition} for hyperbolic tiling graphs with some distinctions, namely, the desired running time allows the usage of a simpler divide \& conquer algorithm instead of the dynamic programming algorithm. Furthermore, the size of the convex hull of an $n$-vertex subgraph of $T_{4,4}$ is $O(n^2)$, unlike hyperbolic $\{p,q\}$-tiling graphs where it was $O(n)$.
So, instead of dividing our problem with nooses, we work with a rectangle-shaped bounded regions that we divide with axis-aligned lines. 

We say that a rectangle or line is \EMPH{grid-aligned} if its sides are axis-aligned and its vertices are grid points. 
We use $V(\rho)$ to refer to the set of grid points that fall on the boundary $\rho$.
Observe that the boundary $\rho$ of a grid-aligned rectangle of $\til$ and an edge $e$ of $\til$ are either disjoint, intersect in one or two endpoints of $e$, or $e\subset \rho$. Consequently, the boundary of a grid-aligned rectangle induces a vertex separator in $\til$.

\subparagraph*{Rectangle-restricted problem.}

We use the same notion of valid triplets $(\rho,\phi,\wt B)$ as in \Cref{sec:Algo}, with the difference that instead of using nooses, we use grid-aligned rectangle boundaries $\rho$. 
Let $H$ be a graph with $n$ vertices, and let $(\rho,\phi,\wt B)$ be a triplet consisting of a grid-aligned rectangle boundary, a boundary assignment, and a boundary neighborhood constraint.
We want to decide whether $H(\rho,\phi,\wt B)\subseteq H$ can be embedded to $\til$ inside $\reg(\rho)$ satisfying~$\phi$ and the restriction given by the boundary neighbors $\wt B$.

In this setting, a triplet where $\reg(\rho)$ is a single grid tile represents a leaf triplet and can be solved in constant time by exhaustive enumeration. 
For a triplet $(\rho,\phi,\wt B)$ where $\reg(\rho)$ is larger than a single grid tile, the problem can be split into two child triplets $(\rho_L,\phi_L,\wt B_L)$ and $(\rho_R,\phi_R,\wt B_R)$ by splitting $\reg(\rho)$ with a grid-aligned line $S$. Consequently, both $\rho_L$ and $\rho_R$ are grid-aligned rectangle boundaries such that one side is $S$ and the three other sides come from $\rho$. Similarly, $\phi_L|_{\rho\cap\rho_L}=\phi|_{\rho\cap\rho_L}$ and we assign new vertices of $H$ to $\phi_L|_{S}$. The same is true for $\phi_R$.
The problem can be solved similarly as in \Cref{sec:Algo}, except that instead of computing valid triplets from smallest to largest, we start with the root triplet where $\rho$ is a $2n\times 2n$ grid-aligned square boundary, $\phi$ is the empty function and $\wt B=\emptyset$. Then we divide the problem recursively into all possible pairs of child triplets with all possible grid-aligned lines.

\secondtheorem*

\begin{proof}
Let $H_0$ be an $N$-vertex graph and let $(\rho_0,\phi_0,\wt B_0)$ be a triplet consisting of a $2N\times 2N$-grid-aligned square $\rho_0$, the empty function $\phi_0$, and the boundary neighborhood constraint $\wt B_0=\emptyset$. We use the divide \& conquer algorithm described above to decide whether $H$ can be embedded inside~$\rho_0$.
According to  \cref{thm:separator44} and  \cref{cor:separator36}, for any embedding of a graph $H$ to a region $\reg(\rho)$, there exists a grid-aligned line $S$ that divides $\rho$ into two grid-aligned rectangles, such that $S$ is a vertex separator of $H$, there are at most $\frac45|V(H)|$ vertices of $H$ on both sides and there are at most $O(\sqrt{|V(H)|})$ vertices of $H$ on $S$.

For each rectangle-restricted problem in our algorithm where we have a graph $H\subseteq H_0$ with size $|V(H)|=n\le N$, a grid-aligned rectangle $\rho$ such that $\reg(\rho)\subseteq\reg(\rho_0)$, a set of boundary vertices $B\subseteq H$, a boundary assignment $\phi:B\to\rho$, and a boundary neighborhood constraint $\wt B\subseteq H$. We divide $(\rho,\phi,\wt B)$ into all possible pairs of child triplets $(\rho_L,\phi_L,\wt B_L)$ and $(\rho_R,\phi_R,\wt B_R)$.
There are $O(N)$ grid-aligned lines $S$ that divide $\reg(\rho)$ into two regions $\rho_L$ and $\rho_R$. For each such line, we compute all boundary assignments $\phi_S:V(H\setminus B)\to V(S\cap \til)$ of size $O(\sqrt{n})$, which will be part of both $\phi_L$ and $\phi_R$. There is a total of $n^{O(\sqrt{n})}$ subsets $B_S$ of $V(H\setminus B)$ of size $O(\sqrt{n})$. Since $S$ has size at most $2N$, there are $N^{O(\sqrt{n})}$ subsets of $V(\til\cap S)$ of size $O(\sqrt{n})$.
Thus, there is a total of $n^{O(\sqrt{n})}\cdot N^{O(\sqrt{n})}=N^{O(\sqrt{n})}$ possible boundary assignments $\phi_S:B_S\to S\cap\til$. For each vertex in a subset $B_S$, that vertex has at most 4 neighbors, so there is a total of $O(\sqrt{n})$ neighbors of vertices of $B_S$. For each neighbor, we assign it to either side of $S$ (either to $\wt B_L$ or to $\wt B_R$), which makes a total of $2^{O(\sqrt{n})}$ assignments of neighbors. So, one subproblem of size $n$ has a number of recursive calls equal to $O(N)\cdot n^{O(\sqrt{n})}\cdot N^{O(\sqrt{n})}\cdot 2^{O(\sqrt{n})}=N^{O(\sqrt{n})}$, each with a subproblem of size at most $\frac{4}{5}n$.
So, we solve a subproblem $T(n)$ in time ($C$ is a constant):

\begin{multline*}
T(n)=N^{O(\sqrt{n})}\cdot T\left(\frac45 n\right)
=
N^{C\cdot\sqrt{n}}\cdot N^{C\cdot\sqrt{\frac{4}{5}n}}\cdot N^{C\cdot\sqrt{\left(\frac{4}{5}\right)^2n}}\cdot ...\cdot N^{C\cdot\sqrt{\left(\frac{4}{5}\right)^in}}\\
<
\exp(\log N \cdot \big[C\cdot\sqrt{n}\cdot\sum_{i=0}^\infty\left(\sqrt{\frac45}\right)^i  \big])=N^{O(\sqrt{n})}.
\end{multline*}

Thus the original instance can be solved in time $T(N)=N^{O(\sqrt{N})}$.
\end{proof}

\subsection{Lower bound.}

We prove a lower bound for \textsc{Tiling (Induced) Subgraph Recognition}. 
Our reduction includes three variations of \textsc{3-CNFSAT}.
\begin{definition}[\textsc{3-CNFSAT}]
    An instance of \textsc{3-CNFSAT} (or \textsc{3-SAT}) is a Boolean formula $\Phi=\bigwedge_{i=1}^{m} (\ell_{i1}\vee\ell_{i2}\vee\ell_{i3})$, where each $\ell_{ij}$ is a literal, that is,
    either $x_k$ or $\neg x_k$ for some boolean variable $x_k\in\{x_1,\dots,x_n\}$. 
    The question is whether there exists a truth assignment $\alpha:\{x_1,\dots,x_n\}\to\{\textsc{true},\textsc{false}\}$ that satisfies every clause of $\Phi$.
\end{definition}

We define the following two variations of \textsc{3-SAT}.
\begin{itemize}
    \item \textsc{(3,3)-SAT}: Given a formula $\Phi$ of \textsc{3-SAT}, where each clause has at most 3 literals and each variable appears at most 3 times, is there an assignment $\alpha$ that satisfies every clause of $\Phi$.
    \item \textsc{Not-All-Equal 3CNFSAT}, or \textsc{NAE3SAT}: Given a formula $\Phi$ of \textsc{3-CNFSAT}, is there an assignment $\alpha$ that satisfies every clause of $\Phi$, such that every clause has at least one false literal.
\end{itemize}

We will also use the following observation, see~\cite{BergBKMZ20}.

\begin{observation}[\cite{BergBKMZ20}]\label{obs:33size}
    If $\phi$ is a \textsc{(3,3)-SAT} formula on $n$ variables, then it has $O(n)$ clauses and has total size $O(n)$.
\end{observation}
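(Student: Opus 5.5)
The statement is Observation~\ref{obs:33size}: a \textsc{(3,3)-SAT} formula $\phi$ on $n$ variables has $O(n)$ clauses and total size $O(n)$. The plan is a double-counting argument on literal occurrences.

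First I bound the total number of literal occurrences. Each variable $x_k$ appears at most $3$ times in $\phi$, counting both $x_k$ and $\neg x_k$. Summing over the $n$ variables, the formula has at most $3n$ literal occurrences in total.

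Second I bound the number of clauses. I may assume, without loss of generality, that no clause is empty. An empty clause makes $\phi$ trivially unsatisfiable, so it can be detected and handled in constant time, or such clauses are excluded by the convention that clauses contain literals. Under this assumption every clause contains at least one literal occurrence, and distinct clauses contain disjoint sets of occurrences. Hence the number of clauses $m$ satisfies $m\le 3n$.

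Finally I bound the total size. I measure it as the number of clauses plus the number of literal occurrences, plus $n$ for the variable list if the encoding includes one. This gives at most $3n+3n+n=O(n)$. Each clause has at most $3$ literals, so writing a clause costs $O(1)$ symbols, which is consistent with the count above.

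No step here is a real obstacle. The only point needing care is excluding empty or duplicate clauses, since otherwise the clause count would not be bounded in terms of $n$. I will handle this with a brief remark: empty clauses make the instance trivially false, and duplicate clauses can be removed without changing satisfiability. This matches the convention of~\cite{BergBKMZ20}.
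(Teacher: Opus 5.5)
Your double-counting argument is correct and is the standard reasoning behind this observation, which the paper does not prove but imports from \cite{BergBKMZ20}: there are at most $3n$ literal occurrences, every nonempty clause uses at least one of them, so $m\le 3n$ and the total size is $O(n)$. One small correction: only empty clauses need excluding, since duplicate clauses are already covered by your count (each copy uses its own occurrences of its variables), so your claim that duplicates would make the clause count unbounded is wrong, though harmless to the proof.
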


\begin{restatable}{lemma}{eth}\label{lm:reduction}
    There is a polynomial time reduction that, given an instance $\Phi$ of \textsc{(3,3)-SAT} on $n$ variables, produces an equivalent instance of $\{4,4\}$-\textsc{Tiling (Induced) Subgraph Recognition} on $O(n^2)$ vertices. 
\end{restatable}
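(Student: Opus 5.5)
We compose known reductions, as announced in the introduction. The chain is: \textsc{(3,3)-SAT} reduces to a planar variant of \textsc{NAE3SAT}, and the latter reduces to embedding a tree in the grid $T_{4,4}$ via the construction of Bhatt and Cosmadakis~\cite{BHATT1987263}. We track sizes throughout so that the final instance has $O(n^2)$ vertices.

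\emph{Step 1 (planarization).} Starting from a \textsc{(3,3)-SAT} formula $\Phi$ of total size $O(n)$ (\Cref{obs:33size}), we draw its incidence graph. We replace each crossing by the standard crossover gadget of Lichtenstein-type planarization, which yields an equivalent planar 3-SAT instance. A straight-line or orthogonal drawing of an $O(n)$-size graph has $O(n^2)$ crossings, so the resulting planar formula has size $O(n^2)$.

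\emph{Step 2 (to NAE).} Next we apply the standard reduction to planar \textsc{NAE3SAT} in the form used by Bhatt and Cosmadakis (the ``not-all-equal'' variant of their grid reduction, e.g.\ via planar NAE with a global variable made planar). Each gadget has constant size, so this step preserves size up to constants and keeps the size at $O(n^2)$. If the Bhatt--Cosmadakis reduction already starts from planar 3-SAT with a fixed drawing, we can skip the NAE detour and feed it directly.

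\emph{Step 3 (grid embedding).} Following~\cite{BHATT1987263}, we lay out the planar formula on a grid of area $O(N)$, where $N=O(n^2)$ is the formula size. We then build the ``spine with bristles'' tree whose unit-length grid embeddings are rigid up to symmetry and encode the truth assignment. The tree has a number of vertices linear in the layout area, so $O(n^2)$ vertices in total.

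For the induced variant, we argue that the constructed tree is embeddable as a subgraph if and only if it is embeddable as an induced subgraph. The plan is to subdivide each edge, scaling by a factor of $2$: a subdivided tree embedded with all original vertices on even coordinates cannot create chords, and the rigidity argument survives scaling. This changes sizes only by a constant factor.

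The main obstacle is controlling the size blowup in Steps 1 and 3. We must ensure that the Bhatt--Cosmadakis construction is quasi-linear in the area of the layout, rather than quadratic in the number of variables on top of the planarization. We must also ensure that planarization costs only $O(n^2)$, which holds because the incidence graph has bounded degree. Correctness is inherited from the cited reductions.
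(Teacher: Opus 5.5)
\paragraph{Gap: Step~2 is impossible, and the planarization detour misreads the cited reduction.} Planar \textsc{NAE3SAT} is solvable in polynomial time (Moret, 1988). So there is no polynomial reduction from planar 3-SAT to planar \textsc{NAE3SAT} unless $\mathrm{P}=\mathrm{NP}$, and the global variable of the standard 3-SAT$\to$\textsc{NAE3SAT} reduction cannot be ``made planar''.

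Your fallback does not help either. The Bhatt--Cosmadakis construction~\cite{BHATT1987263} does not start from planar 3-SAT or from a grid layout of the formula. It takes an arbitrary, non-planar \textsc{NAE3SAT} instance with $n$ variables and $m$ clauses and outputs a tree on $O(mn)$ vertices. It has to handle non-planar instances, since planar \textsc{NAE3SAT} is easy. Your Step~3 accounting (tree size linear in the area of a layout of a planar formula, with area $O(N)$) is therefore not backed by the cited result. Moreover, $O(N)$-area layouts are not available for planar formulas in general.

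\paragraph{Missing idea: the $O(n^2)$ bound is exactly the $O(mn)$ bound of~\cite{BHATT1987263}.} The paper reduces \textsc{(3,3)-SAT} directly to non-planar \textsc{NAE3SAT}. Each clause $(\ell_1\vee\ell_2\vee\ell_3)$ becomes $(\ell_1\vee\ell_2\vee v_c)$ and $(\neg v_c\vee\ell_3\vee w)$, with a fresh $v_c$ per clause and a global $w$. This keeps both the number of variables and the number of clauses at $O(n)$, and then Bhatt--Cosmadakis is applied.

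Planarizing first is actively harmful. After the crossover gadgets, both counts can be $\Theta(n^2)$, and the $O(mn)$ bound then gives $O(n^4)$ vertices. That would only exclude $2^{o(N^{1/4})}$-time algorithms, not $2^{o(\sqrt N)}$.

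\paragraph{Induced variant.} Your subdivision argument does not work. Because the grid is bipartite, subdivision only puts all original vertices in one parity class of $x+y$, not on even coordinates. A subdivided path $u,m_1,v,m_2,w$ can be bent around a unit square so that $u$ and $m_2$ become grid-adjacent, which is a chord. Subdivision vertices are also new potential bend points, so the rigidity of the original construction is not inherited automatically. For comparison, the paper gives no separate argument for the induced case; it only cites~\cite{BHATT1987263}.
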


\begin{proof}
    There is a polynomial time reduction from \textsc{3-SAT} to \textsc{NAE3SAT} presented in~\cite{williams2007algorithms} where each original clause and variable corresponds to constantly many new clauses and variables.
    We can do a similar reduction from \textsc{(3,3)-SAT} to \textsc{NAE3SAT} as follows: 
    \begin{itemize}
        \item For a clause $(\ell_1\vee\ell_2\vee\ell_3)$ of three variables of \textsc{(3,3)-SAT}, we construct the pair of clauses $(\ell_1\vee\ell_2\vee v_{c_i}),(\neg v_{c_i}\vee\ell_3\vee w)$ as in the reduction of~\cite{williams2007algorithms}, where $v_{c_i}$ and $w$ are newly introduced variables.
        \item For a clause $(\ell_1\vee\ell_2)$ of two variables of \textsc{(3,3)-SAT}, we construct the pair of clauses $(\ell_1\vee\ell_2\vee v_{c_i}),(\neg v_{c_i}\vee w'\vee w)$, where $w'$ is a variable analogous to $w$.
        \item For a clause $(\ell_1)$ of one variable of \textsc{(3,3)-SAT}, we construct the pair of clauses $(\ell_1\vee w''\vee v_{c_i}),(\neg v_{c_i}\vee w'\vee w)$, where $w''$ is a variable analogous to $w$ and $w''$.
    \end{itemize}
    Otherwise, our reduction is similar to the reduction from \textsc{3-SAT} to \textsc{NAE3SAT}. This creates an equivalent instance of \textsc{NAE3SAT} with $O(n)$ varaibles and $O(n)$ clauses.

    According to~\cite{BHATT1987263}, there is a polynomial  time reduction from
    \textsc{NAE3SAT}
    to $\{4,4\}$-\textsc{Tiling Subgraph Recognition}, where a formula of \textsc{NAE3SAT} with $n$ variables and $m$ clauses becomes a tree of size $O(mn)$.
\end{proof}

Finally, we prove a lower bound for \textsc{Tiling Subgraph Recognition}
under the Exponential Time Hypothesis (ETH)~\cite{ImpagliazzoP01}.

\begin{theorem}
    Unless ETH fails, \textsc{Tiling Subgraph Recognition} admits no algorithm with running time $2^{o(\sqrt{n})}$, where $n$ is the number of vertices in the pattern graph.
\end{theorem}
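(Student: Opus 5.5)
The argument is a composition of \Cref{lm:reduction} with the standard ETH consequence for bounded-occurrence SAT. The theorem is naturally read for the square grid $\{4,4\}$, which is where the reduction lives. Suppose, for contradiction, that $\{4,4\}$-\textsc{Tiling Subgraph Recognition} admits an algorithm running in time $2^{o(\sqrt{N})}$ on pattern graphs with $N$ vertices. We will turn this into a $2^{o(n)}$ algorithm for \textsc{(3,3)-SAT} on $n$ variables, which contradicts ETH.

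First we recall that ETH rules out $2^{o(n)}$ algorithms for \textsc{(3,3)-SAT}. The Sparsification Lemma of Impagliazzo, Paturi and Zane shows that ETH excludes $2^{o(n+m)}$-time algorithms for \textsc{3-SAT} with $m$ clauses. The classical reduction from \textsc{3-SAT} to \textsc{(3,3)-SAT} replaces each variable occurring $k$ times by a cycle of $k$ fresh variables linked by implications. This produces $O(n+m)$ variables, so a $2^{o(n)}$ algorithm for \textsc{(3,3)-SAT} would give a $2^{o(n+m)}$ algorithm for \textsc{3-SAT}. \Cref{obs:33size} keeps the instance size linear throughout this step.

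Next, given a \textsc{(3,3)-SAT} instance $\Phi$ on $n$ variables, we apply \Cref{lm:reduction}. In polynomial time this yields an equivalent pattern graph $H$ (a tree) on $N=O(n^2)$ vertices. This bound comes from the intermediate \textsc{NAE3SAT} instance having $O(n)$ variables and $O(n)$ clauses, together with the $O(mn)$ size bound of Bhatt and Cosmadakis. Running the hypothetical algorithm on $H$ takes time $2^{o(\sqrt{N})}=2^{o(\sqrt{n^2})}=2^{o(n)}$. Adding the polynomial reduction time, we decide $\Phi$ in $2^{o(n)}$ time, which contradicts ETH. The same argument goes through verbatim for the induced variant, since \Cref{lm:reduction} covers it too.

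The main point needing care is bookkeeping of the blow-up in \Cref{lm:reduction}. The pattern must have only $O(n^2)$ vertices, so that $\sqrt{N}=O(n)$. This relies on keeping the number of clauses linear, which is exactly why we start from \textsc{(3,3)-SAT} rather than general \textsc{3-SAT}; that choice is justified by \Cref{obs:33size} and sparsification. Another point is the $o(\cdot)$ manipulation: with $N\le cn^2$ we have $\sqrt{N}\le\sqrt{c}\,n$, so any function in $o(\sqrt{N})$ evaluated there is $o(n)$. Beyond these two checks the proof is routine composition.
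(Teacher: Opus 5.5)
Your proposal is correct and is essentially the paper's proof. You compose \Cref{lm:reduction} (a pattern on $N=O(n^2)$ vertices) with the ETH lower bound for \textsc{(3,3)-SAT}, so that a $2^{o(\sqrt N)}$ algorithm would yield $2^{o(n)}$; the only difference is that you derive that lower bound via sparsification and the standard occurrence-reducing reduction, and make the $o(\cdot)$ bookkeeping explicit, whereas the paper simply cites Proposition~3.1 of~\cite{BergBKMZ20}.
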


\begin{proof}
    Suppose that there exists an algorithm with running time $2^{o(\sqrt{n})}$ for \textsc{Tiling Subgraph Recognition}. According to \Cref{lm:reduction}, there is a polynomial time reduction of from
    \textsc{(3,3)-SAT} to $\{4,4\}$-\textsc{Tiling (Induced) Subgraph Recognition} where for a formula of size $n$ we obtain an equivalent instance of $\{4,4\}$-\textsc{Tiling (Induced) Subgraph Recognition} of size $N=O(n^2)$. Thus, composing this reduction with a hypothetical $2^{o(\sqrt{n})}$ algorithm yields an algorithm for \textsc{(3,3)-SAT} with running time
    $\poly(n)+ 2^{o(\sqrt{O(N)})}=2^{o(n)}$. However, according to~\cite[Proposition 3.1]{BergBKMZ20}  , such an algorithm does not exist unless ETH fails, which forms a contradiction.
\end{proof}

\section{Conclusion}

In this paper we gave a quasi-polynomial algorithm for recognizing subgraphs of hyperbolic tilings, and a nearly ETH-tight algorithm for recognizing subgraphs of Euclidean tilings. Several important open questions remain for future work.
\begin{itemize}
\item Is there an $n^{O(\log n)}$ algorithm for recognizing subgraphs of hyperbolic tilings for large values of~$q$?
\item Can the techniques be extended to \textsc{Subgraph Isomorphism} in Gromov-hyperbolic planar graphs?
\item Is there a $n^{\Omega(\log n)}$ lower bound for recognizing subgraphs, of hyperbolic tilings, or is it polynomial-time solvable? The only known quasi-polynomial lower bound technique for hyperbolic settings~\cite{Kisfaludi-Bak20} does not apply here, as it relies on locally dense structures.
\item Are there $2^{o(n)}$ algorithms for recognizing subgraphs of higher dimensional regular tilings?
\end{itemize}

\bibliography{bibliography}
\clearpage

\appendix

\section{Missing proof of \Cref{sec:prelims}}

\boundary*

\begin{proof}
    The proof of Lemma 5.5 of~\cite{KisfaludiBak25} relies on Lemma 5.3 of~\cite{KisfaludiBak25}, where it is proved that the unbounded face (henceforth called \EMPH{boundary}) of a convex hull has size $O(N)$, where $N$ has a different definition than our graph size $n$. 
     Thus, in order to use the proof of Lemma 5.5 of~\cite{KisfaludiBak25} in our setting, we only need to show that the boundary of $\conv(H)$ has $O(n)$ vertices.    
    Let $u,v$ be two vertices of $H$ that are on the boundary of $\conv(H)$.
    The length of the shortest path on the boundary of $\conv(H)$ is at most the length of the shortest path from $u$ to $v$ following edges of $H$. Now, the length of the boundary of $\conv(H)$ is upper bounded by
    the sum of shortest paths following edges of $H$ between vertices of $H$ that are on the boundary of $\conv(H)$.
    As $H$ is a subgraph of $T_{p,q}$, those shortest paths are on the unbounded face of $H$. 
    Each vertex on the boundary of $H$ is intersected at most twice, so the boundary of $\conv(H)$ has length at most $2n$. 
\end{proof}

\section{Full details of \Cref{sec:Struct}}

The following proof was missing from the main paper.

\bisectorray*

\begin{proof}
    The ray $r$ intersects some vertices of $\til$. In addition to that, depending on the parity of $p$, the ray $r$ may also intersect edges of $\til$ at the edge's middle point, or intersect edges of $\til$ such that $r$ covers the edge entirely. 
    We assume for the sake of contradiction that 
    some vertex $w$ of the convex hull of $H$, distinct from $v$, and intersected by $r$.
    According to \Cref{lm:kisfaudibak25_lemma3_3}, any shortest path between $v$ and $w$ visits, on the segment between $v$ and $w$, (i) all vertices intersected by $r$, and (ii) for each edge cut by $r$, at least one vertex incident to that edge.  
    Because $v,w\in V(\conv(H))$, all vertices visited by any shortest path between $v$ and $w$ are in $V(\conv(H))$. 

    If $p$ is even, the first element that $r$ intersects after $v$, is the vertex of tile $\tau$ opposite to $v$. We denote that vertex $w'$. Now, according to \Cref{lm:path_around_tile} all the paths between $v$ and $w'$ contain only vertices of $V(\tau)$. Because of symmetry, the shortest path from $v$ to $w'$ on either side of $\tau$ have the same length. So, all vertices in $V(\tau)$ are vertices in $V(\conv(H))$. Thus, $\tau$ is an bounded face of $\conv(H)$, which is a contradiction.
    
    If $p$ is odd, the first element of $\til$ that $r$ intersects after $v$ is the middle point of the edge of $\tau$ opposite to $v$. We denote that edge $e$. There is exactly one other tile incident to $e$, we denote that tile $\tau'$. The second element of $\til$ intersected by $r$ after $v$ is the vertex of $\tau'$ opposite to $e$, that we denote $w''$. Now, according to \Cref{lm:path_around_tile} and \Cref{lm:kisfaudibak25_lemma3_3}, any shortest path between $v$ and $w''$ visits only vertices of $V(\tau\cup\tau')$. Because of symmetry, paths on either side of $\tau\cup\tau'$ have the same length, thus, all vertices in $V(\tau\cup\tau')$ are vertices in $V(\conv(H))$. Thus, $\tau$ is an bounded face of $\conv(H)$, which is again a contradiction and completes the proof.
\end{proof}

\subsection{Normalized nooses.}

We give here the full description of our normalized nooses.
We can map a sphere cut decomposition from the sphere $\mathbb{S}^2$ to the hyperbolic plane $\mathbb{H}^2$ by puncturing one point $p$ on the sphere that will become the ideal boundary of $\mathbb{H}^2$. We call the punctured point $p$ in $\mathbb{S}^2$ the \EMPH{point at infinity}. 
Each curve in $\ma S^2$ from a point $a$ to a point $b$ through the point at infinity $p$ maps to an infinite length curve in $\mathbb{H}^2$ that goes from the image of $a$ to the ideal boundary of $\mathbb{H}^2$, then the curve can have an ideal segment on the ideal boundary, and finally, has a segment from the ideal boundary of $\ma H^2$ to the image of $b$. 

We define a normalized geometric realization for our nooses. We first define normalized curves, which is a restricted class of curves, then we define normalized nooses, which are a noose that are normalized curves.
A \EMPH{normalized curve}, or \EMPH{generalized polygon}~\cite{Blasius24} in $\mathbb{H}^2$ is a plane cycle, where each vertex is either a vertex of $\til$, the center of a tile in $\til$, or a point in the unbounded face, which can be an ideal point on the ideal boundary of $\ma H^2$.
On the finite part $\gamma^-$ of a normalized curve $\gamma$, that is, the subcurve of $\gamma$ that contains only vertices of $\til$ and tile centers, 
the vertices of $\gamma^-$ alternate between vertices of $\til$ and tile centers. More precisely, neighbors of a vertex $v$ of $\til$ are tile centers of tiles adjacent to $v$. And, neighbors of a tile center $t$ of a tile $\tau$ are vertices of $\til$ that are also vertices of $\tau$. Furthermore, a tile center cannot be a neighbor of an ideal vertex.
An edge between a tile center and vertex of $\til$ is a straight line segment, and an edge between a vertex of $\til$ and an ideal point is a ray, and an edge between two ideal points is an ideal arc. 
A \EMPH{normalized noose} of a convex subgraph $H$ of $\til$ is a normalized curve with the following properties:
\begin{enumerate}
    \item Let $\tau$ be a tile of $\tiling$, the vertices $u,v\in V(\tau)$ and the point $t$ be the center of $\tau$. For a noose segment $\nu\cap\tau$ of $\nu$, from $v_1$ to $v_2$ through $\tau$, $ \nu\cap\tau$ consists of the two straight line segments from $v_1$ to $t$ and from $t$ to $v_2$. Note that according to \Cref{lm:single_tile}, all bounded faces of $H$ are tiles of $\tiling$.
    \item For a noose segment $ \nu_\infty$ through the unbounded face of $H$ from a vertex $w_1$ to a vertex $w_2$, let $r_1$ be the angle bisector ray to infinity starting from $w_1$ and through the leftmost tile of $\tiling$ that is adjacent to $w_1$ and on the unbounded face of $H$. Note that according to \Cref{lm:angle_bisector_ray}, $r_1$ is disjoint from $H\setminus w_1$. Similarly, let $r_2$ be the angle bisector ray to infinity from $w_2$ through the leftmost tile of $\tiling$ that is adjacent to $w_2$ and on the unbounded face of $H$. We call $i_1$ and $i_2$ the ideal endpoints of $r_1$ and $r_2$ respectively.
    If $r_1$ and $r_2$ intersect at a point $p$, we call the straight line segments from $w_1$ to $p$ and from $p$ to $w_2$. 
    If $r_1$ and $r_2$ do not intersect, $ \nu_\infty$ consists of $r_1$ followed by an ideal arc at infinity between $i_1$ and $i_2$ such that the ideal arc turns right after $r_1$ according to the orientation of $\nu$, followed by $r_2$.
\end{enumerate}

\subsection{Encoding candidate nooses.}

Each vertex in $\til$ has $q$ neighboring tiles and each tile center has $q$ vertices of $\til$ in its tile. 
So, for a noose $\nu$ and a vertex $v\in \nu\cap\til$, there are $q$ possible tile centers that can be the next vertex of $\nu$. After that tile center, there are $p$ vertices of $\til$ that can be the next vertex of $\nu$. So, there are $pq$ possible noose segments from $v$ to the next vertex in $\nu\cap\til$.
Now, we encode a noose $\nu$ such that we fix a distinguished origin vertex, and 
we first encode a shortest path from the origin to any vertex $v$ of $\nu$ as a sequence of directions such that each step tells us in which direction we will go next.
Then, we encode $\nu$ as a sequence of directions starting from $v$ and ending in $v$. 
We define the directions at every vertex such that, the first time we visit a vertex, we label all  directions from that vertex with a label $j$ where $0\le j\le q-1$ with a fixed cyclic order and direction 0 is the direction where we came to this vertex. 
If $\nu$ intersects the origin, we will select $v$ to be the origin, so the path to $v$ will have length 0.

Given any sequence of directions, one can simulate the corresponding walk in $\til$ and decide whether it forms a closed cycle in time polynomial in the size of the sequence~\cite{Bridson09,Epstein92}.

To encode a normalized noose $\nu'$, where $\nu'\not=(\nu')^-$, we first encode a shortest path $P$ from the origin to a specific vertex $v_1\in V((\nu')^-)$, where $v_1$ is the first vertex in the sequence of vertices of the open subcurve $(\nu')^-$ ordered according to the orientation of $\nu'$. After $P$, we encode the open subcurve $(\nu')^-$ as a sequence of directions. Finally, we encode in which of the $q$ directions the rays at both ends of $(\nu')^-$ point. The rays can be connected in only one way whether they intersect or not, so the ideal arc, or the point of intersection of rays does not need to be encoded.

We encode a candidate noose $\nu''$ the same way as a normalized noose, with the two following restrictions:
\begin{itemize}
    \item the path from the origin to the first vertex of $\nu''$ is always empty, and
    \item the direction of the first edge in any candidate noose is fixed.
\end{itemize}

\end{document}